\documentclass[11pt,a4paper]{article}
\usepackage[utf8]{inputenc}
\usepackage[T1]{fontenc}
\usepackage{mathptmx}
\usepackage[dvipsnames]{xcolor}
\usepackage{microtype}
\usepackage[margin=2.5cm]{geometry}
\usepackage{amsmath,amssymb,amsthm,mathtools}
\usepackage{enumitem}
\usepackage{tikz}
\usepackage{bm}
\usepackage{caption}
\usetikzlibrary{arrows.meta,decorations.pathreplacing,calc}
\usepackage[colorlinks=true,linkcolor=MidnightBlue,citecolor=MidnightBlue,urlcolor=MidnightBlue]{hyperref}

\theoremstyle{plain}
\newtheorem{theorem}{Theorem}

\newtheorem{corollary}{Corollary}
\theoremstyle{definition}
\newtheorem{definition}{Definition}
\theoremstyle{remark}

\newcommand{\PP}{\mathcal{P}}
\newcommand{\U}{\mathcal{U}}
\newcommand{\bb}[1]{\mathbf{#1}}
\newcommand{\rr}{\bb{r}}
\newcommand{\nn}{\bb{n}}
\newcommand{\rpar}{\rr^{\parallel}}
\newcommand{\xr}{x_{\rr}}
\newcommand{\simp}{\triangle_1}

\newcommand{\Var}{\operatorname{Var}}
\newcommand{\Cov}{\operatorname{Cov}}
\newcommand{\avg}[1]{\langle #1\rangle}
\newcommand{\avgU}[1]{\langle #1\rangle_{\U}}
\newcommand{\rhobar}{\bar\rho}
\newcommand{\ind}{\chi}

\title{\textbf{Are quantum measurements universal?}}
\author{Massimiliano Sassoli de Bianchi\\[6pt]
\small Center Leo Apostel for Interdisciplinary Studies, Brussels Free University, Brussels, Belgium\\[4pt]
\small\textit{27 October 2026}}
\date{}

\begin{document}
\maketitle

\begin{abstract}
\noindent In 2014, together with D. Aerts, we showed -- within the extended Bloch representation of quantum mechanics -- that the Born rule can be understood as resulting from a universal average over all the possible ways in which the measurement-interactions can be selected, a result that was obtained by averaging over discretized (cellular) probability densities and then taking the continuum limit. The construction has since been questioned, in the context of Bertrand's paradox, by N.~Shackel, who observed that the result depends on the specific discretization adopted, that the sequence of averages is constant, and that the sense in which the average is taken over ``all'' probability densities remains unclear. In this article, we take these criticisms as a stimulus to re-examine the notion of universal measurement. 
We show that the key to the whole matter is to recognize that a universal measurement is a random way of selecting a measurement which does not privilege any portion of the abstract structure describing the available interactions, a property we call indifference. We prove that every indifferent universal measurement, however it is constructed, produces exactly the Born probabilities, so that the universal average is unique, and the cellular construction of 2014 is one among infinitely many ways of obtaining it. For cellular measurements, the result already holds exactly at every finite resolution, with no limit to be taken, which explains the constancy of the sequence of averages and dispenses with the combinatorial identities used in the original proof. We then show that non-trivial indifferent universal measurements exist, and that they must involve deterministic measurements: the random mixtures of deterministic measurements known in probability theory as Dirichlet processes are indifferent universal measurements whose support contains every possible way of selecting an interaction, whereas an indifferent universal measurement free of deterministic components necessarily selects the uniform density, so that for it the Born rule holds not only on average but with certainty. The deviations from the quantum statistics that deterministic components can produce in sequences of measurements are governed by a single parameter, which is in principle accessible to experiment. The only geometric input of the whole construction is the Euclidean length (or volume) of the measurement membrane, together with the orthogonal landing of the point-particle representing the state, which is where the content of Gleason's theorem lies. Finally, we discuss the physical meaning of these results, indicate how they generalize to measurements with an arbitrary number of outcomes, and comment on their bearing on Bertrand’s paradox.
\end{abstract}
{\bf Keywords}: Born rule; Bloch representation; hidden-measurements; universal measurements; principle of indifference; Dirichlet process; Bertrand’s paradox; Gleason's theorem

\section{Introduction}\label{sec:intro}

The idea that quantum measurements might be \emph{universal measurements} has a long history, which it is useful to recall briefly, as it explains both the motivation of the present article and the nature of the questions it addresses.

The starting point is the \emph{hidden-measurement} approach to quantum probabilities, proposed by Diederik Aerts in the eighties \cite{Aerts1986}. In this approach, the indeterminism of a quantum measurement is not attributed to a lack of knowledge about the state of the measured entity (which would lead to a classical, Kolmogorovian probability model), but to a lack of knowledge about the \emph{interaction} between the measuring apparatus and the entity, which is each time selected in an uncontrollable way. The simplest illustration is the so-called \emph{spin quantum machine}, or \emph{sphere model}: a point particle (descriptive of the system's state) on the surface of a sphere, an abstract elastic band stretched between two opposite points of the sphere, the particle falling orthogonally onto the band and sticking to it, then the band disintegrating at some unpredictable point, and contracting, thus drawing the particle to one of its two end points. When the band is uniformly disintegrable, one obtains exactly the probabilities of a spin-$\frac12$ measurement. When it is not, one obtains probability models which are neither classical nor quantum, the so-called \emph{intermediate} models, of which the $\epsilon$-model is the paradigmatic example \cite{Aerts1993, Aerts1994, AertsDurt1994, Aerts1995}.

The existence of this whole spectrum of possible measurements immediately raised a question: why, in the microscopic world, would it be precisely the (uniform) quantum measurements that are realized? The answer that was conjectured in \cite{Aerts1998,Aerts1999} was that quantum measurements could be \emph{universal} measurements, i.e., the result of a huge average over all the possible ways of selecting an interaction, so that the quantum probabilities would be those of a ``first-order non-classical theory,'' in the same way as the first term of a Taylor expansion is a first-order approximation to any smooth function. A heuristic argument in favor of this conjecture was based on Gleason's theorem \cite{Gleason1957,Sassoli2026}: if the transition probability of a universal measurement depends only on the state before the measurement and the state after it, then it must be the quantum transition probability. But the conjecture could not be given, at the time, a precise mathematical content, because any attempt to define a uniform average over ``all possible ways of selecting an interaction'' was immediately confronted with the difficulties of Bertrand's paradox \cite{Bertrand1889}.

The scientific collaboration between D. Aerts and the author started in 2013 precisely on this issue, initially from a position of disagreement, as the author was then exploring a different explanation of the emergence of quantum measurements, based on the notion of \emph{robustness} \cite{Raedt2014,AS2015b}, and was rather skeptical about the possibility of giving a precise meaning to the notion of universal measurement. The construction of a simple toy model, however, confirmed the conjecture instead of refuting it, and led to the general result published in \cite{AS2014a} (see also \cite{AS2015a,ASbook}): in the \emph{extended Bloch representation} of quantum mechanics, where a measurement with $N$ outcomes is represented by an abstract $(N-1)$-dimensional elastic and disintegrable membrane, the Born rule corresponds to a uniform membrane, and its uniform disintegrability can in turn be understood as resulting from a \emph{universal average} over all the possible non-uniform ways in which the membrane can disintegrate. The average was defined by first discretizing the membrane into $n$ cells of equal size, considering all the $2^n-1$ ``cellular'' probability densities taking a constant value on a subset of the cells and vanishing elsewhere, averaging the outcome probabilities uniformly over them, and finally taking the limit $n\to\infty$, in the spirit of the definition of the Wiener measure. The same construction was then also applied to Bertrand's paradox, providing what we called a solution of its \emph{hard problem} \cite{AS2014b}.

It is in this latter context that our construction was examined, with great care, by N. Shackel in his recent monograph on Bertrand's paradox and the principle of indifference \cite{Shackel2024}, where a whole chapter is devoted to our universal average. Shackel's observations can be summarized as follows. First, the argument by which we showed that every probability density can be approximated by a sequence of cellular densities proves less than it claims, because the approximating sequence is allowed to depend on the interval under consideration (what Shackel calls an ambiguity in the order of the quantifiers). Second, the three constraints on which the construction rests (cells of equal length, two values only, uniform weights) are not justified, and as soon as more than two values are allowed there are uncountably many cellular densities, so that a uniform average over them seems no longer definable. Third, the sequence of averages over the cellular densities is in fact constant, in the sense that each of its terms is already the uniform probability, so that the limit $n\to\infty$ adds nothing, and the theorem looks ``trivial.'' Fourth, in the context of Bertrand's paradox, the result depends on the parameterization adopted (what Shackel calls a ``naming''), so that the paradox would recur at the level of the choice of the parameterization.

We consider these observations as valuable, and the present article, which is to be regarded as the natural continuation of the effort initiated in \cite{AS2014a}, has been stimulated by them. More precisely, we shall see that they can all receive a clear answer as soon as one recognizes that a universal average is an \emph{invariant average}, i.e., that the ``uniform average over all possible probability densities'' of \cite{AS2014a} is best understood as an average which does not privilege any portion of the elastic band, for the two-outcome situation, or of more general simplicial structures, for the $N$-outcome situation. Once this is recognized, the following becomes clear.

\begin{enumerate}[leftmargin=*]
\item The universal average is \emph{unique}: any random way of selecting a $\rho$-measurement (a measurement characterized by a probability density $\rho$) which is invariant in the above sense produces exactly the uniform density $\rho_u$, hence the Born rule. The cellular construction presented in \cite{AS2014a} is one way, among infinitely many, of obtaining this uniform average through an indifferent selection of $\rho$-measurements. Its use of two density values and uniform weights is not essential; what matters is the invariance of the selection law under permutations of equal-size cells (Section~\ref{sec:indifference}).
\item The result holds \emph{exactly} at every finite resolution: the average over any invariant family of cellular densities with $n$ cells is the uniform density, for every $n$, with no limit to be taken, and this explains why the sequence of averages in \cite{AS2014a} is constant, and the origin of the combinatorial identities that were used in the proof by induction (Section~\ref{sec:finite} and Appendix~\ref{app:comb}).
\item The ``huge average over all possible ways of selecting an interaction'' can be given a precise meaning, provided one admits, as we did in \cite{AS2014a}, \emph{deterministic} measurements, in which the band disintegrates at a predetermined point. Dirichlet processes, which are random mixtures of deterministic measurements, provide non-trivial indifferent universal measurements whose weak support contains every probability distribution on the band (Section~\ref{sec:pure}).
\item An indifferent universal measurement which almost surely selects no deterministic components necessarily selects the uniform density: the Born rule then governs the outcome probabilities of each individual run, not merely their average. More generally, any non-trivial indifferent universal measurement must admit deterministic components, whose contribution to fluctuations and correlations in sequences of measurements is quantified by the coincidence probability $\beta$ (Sections~\ref{sec:rigidity}--\ref{sec:sequences}).
\item The definition of indifference requires a notion of size, here provided by the Euclidean geometry of the extended Bloch sphere: no such notion can be extracted from the set of all densities alone (Section~\ref{sec:length}).
\end{enumerate}

\noindent The article is organized as follows. In Section~\ref{sec:bloch}, we briefly recall the two-outcome measurement process in the extended Bloch representation, so as to arrive at the key expressions for the outcome probabilities of a general $\rho$-measurement. In Section~\ref{sec:2014}, we recall the universal average as defined in \cite{AS2014a}, and we formulate the questions it leaves open. In Section~\ref{sec:random}, we introduce the notion of universal measurement as a random $\rho$-measurement and we observe a simple but decisive fact: the outcome statistics of a universal measurement are those of an ordinary $\rho$-measurement, with $\rho$ the \emph{average} density. Sections~\ref{sec:indifference}--\ref{sec:length} contain the results listed above and, in Section~\ref{sec:general}, we indicate how everything generalizes to measurements with $N$ outcomes, and we briefly comment on Bertrand's paradox. Section~\ref{sec:conclusion} contains our conclusions, and some additional technical proofs are deferred to the appendices.

\section{Two-outcome measurements}\label{sec:bloch}

We recall the essential elements of the model of \cite{AS2014a}, in the two-outcome case, referring to that article for all details. A two-dimensional quantum entity (for instance a spin-$\frac12$ entity) is described, in the Bloch representation, by a real vector $\rr$ in the three-dimensional unit ball: a unit vector for a vector-state, a shorter vector for an operator-state (density matrix). A measurement with two outcomes $a_1$ and $a_2$, associated with two orthogonal rank-$1$ projection operators $P_{a_1}$ and $P_{a_2}$, is represented by two opposite unit vectors $\nn_1$ and $\nn_2=-\nn_1$, and by the diameter $\simp$ of the ball joining them, which one can think of as an abstract elastic and disintegrable (i.e., breakable) band of length $2$, stretched between $\nn_1$ and $\nn_2$.

The measurement process unfolds in two stages. In the first, deterministic, stage, the point-particle representing the state is attracted by the band and ``falls'' orthogonally onto it, reaching the position
\begin{equation}\label{eq:rpar}
\rpar=(\rr\cdot\nn_1)\,\nn_1
\end{equation}
It is convenient to introduce a coordinate $x\in[-1,1]$ along the band, writing a generic point of $\simp$ as $\bm{\lambda}=x\,\nn_1$, so that $x=-1$ corresponds to $\nn_2$ and $x=1$ to $\nn_1$. The particle then sits at the coordinate
\begin{equation}\label{eq:xr}
\xr\equiv\rr\cdot\nn_1=\|\rr\|\cos\theta(\rr,\nn_1)
\end{equation}
which for a vector-state is simply $\cos\theta(\rr,\nn_1)$, $\theta(\rr,\nn_1)$ being the angle between $\rr$ and $\nn_1$. In the notation of \cite{AS2014a}, $\xr=r_1^\parallel-r_2^\parallel$, where $\rpar=r_1^\parallel\nn_1+r_2^\parallel\nn_2$ is the decomposition of $\rpar$ in the convex combination of the two end points, with $r_1^\parallel+r_2^\parallel=1$. The presence of the particle divides the band into two regions, $A_1=[-1,\xr]$, between $\nn_2$ and $\rpar$, and $A_2=[\xr,1]$, between $\rpar$ and $\nn_1$ (see Fig.~\ref{fig:band}).

In the second, indeterministic, stage, the band, which is unstable, disintegrates at some point $\bm{\lambda}=x\, \nn_1$ and its two pieces contract toward their respective anchor points. If the disintegrating point lies in $A_1$, the particle, attached to the piece anchored at $\nn_1$, is drawn to $\nn_1$, so that the outcome is $a_1$. If it lies in $A_2$, the particle is drawn to $\nn_2$ and the outcome is $a_2$. The point $\bm{\lambda}$ is the \emph{hidden variable}  of the measurement. Note, however, that it is not a hidden variable of the entity, but of the measurement interaction, selected each time in an unpredictable way. This is why the hidden-measurement approach is not affected by the no-go theorems \cite{Bell1966,KochenSpecker1967,Gudder1970}, which refer solely to hidden variables associated with the state of the measured entity. 
\begin{figure}[t]
\centering
\begin{tikzpicture}[scale=1.0]
\draw[->] (-4.2,1.6) -- (4.4,1.6) node[right] {$x$};
\draw[->] (-4,1.6) -- (-4,3.6) node[above] {$\rho(x)$};
\draw[thick,MidnightBlue,domain=-4:4,samples=80] plot (\x,{1.6+0.9*exp(-0.5*(\x+1.4)*(\x+1.4))+0.5*exp(-1.5*(\x-2.3)*(\x-2.3))+0.15});
\draw[dashed] (1.2,1.6) -- (1.2,3.4);
\node at (1.2,1.35) {$\xr$};
\node at (-4,1.35) {$-1$};
\node at (4,1.35) {$1$};
\draw[dashed,gray] (0,-0.6) ellipse (4 and 1);
\draw[line width=3pt,gray] (-4,-0.6) -- (4,-0.6);
\fill (-4,-0.6) circle (2.2pt) node[below left=1pt] {$\nn_2$};
\fill (4,-0.6) circle (2.2pt) node[below right=1pt] {$\nn_1$};
\fill (1.2,-0.6) circle (2.2pt) node[below=3pt] {$\rpar$};
\fill (1.2,0.354) circle (2.2pt) node[above right=1pt] {$\rr$};
\draw[->,>=Latex] (1.2,0.354) -- (1.2,-0.5);
\draw[decorate,decoration={brace,mirror,amplitude=5pt}] (-4,-1.85) -- (1.1,-1.85) node[midway,below=6pt] {$A_1$ (outcome $a_1$)};
\draw[decorate,decoration={brace,mirror,amplitude=5pt}] (1.3,-1.85) -- (4,-1.85) node[midway,below=6pt] {$A_2$ (outcome $a_2$)};
\end{tikzpicture}
\caption{The two-outcome measurement in the extended Bloch representation. The dashed circle is the great circle of the Bloch sphere containing $\nn_1$, $\nn_2$ and the vector-state $\rr$ (drawn in perspective), of which the elastic band $\simp$ is a diameter. The point-particle falls orthogonally from $\rr$ onto the band, reaching $\rpar$, of coordinate $\xr$. The band then breaks at a point selected according to the probability density $\rho(x)$: if the disintegrating point lies in $A_1$, the particle is drawn to $\nn_1$ (outcome $a_1$), otherwise to $\nn_2$ (outcome $a_2$).}
\label{fig:band}
\end{figure}

If the band is made of a uniform substance, all its points being equally likely to disintegrate, the probability of the outcome $a_1$ is the ratio between the length $1+\xr$ of $A_1$ and the total length $2$ of the band, and similarly for $a_2$:
\begin{equation}\label{eq:born}
\PP(\rr\to\nn_1)=\frac{1+\xr}{2}=\frac{1+\rr\cdot\nn_1}{2},\quad \PP(\rr\to\nn_2)=\frac{1-\xr}{2}=\frac{1+\rr\cdot\nn_2}{2}
\end{equation}
which is exactly the Born rule for a two-dimensional entity. This is the content of Section~7 of \cite{AS2014a}.

If the band is not uniform, its disintegrability being described by a probability density $\rho(x)$, $\rho\ge0$, $\int_{-1}^1\rho(x)dx=1$, the outcome probabilities become (Eq.~(80) of \cite{AS2014a}):
\begin{equation}\label{eq:rho-prob}
\PP(\rr\to\nn_1|\rho)=\int_{-1}^{\xr}\rho(x)\,dx,\quad \PP(\rr\to\nn_2|\rho)=\int_{\xr}^{1}\rho(x)\,dx
\end{equation}
We shall call such a measurement a \emph{$\rho$-measurement}. The uniform density $\rho_u(x)=1/2$ gives back \eqref{eq:born}. Hence, a \emph{$\rho_u$-measurement} is a quantum measurement. It is useful to introduce the \emph{cumulative function} of $\rho$,
\begin{equation}\label{eq:F}
F_\rho(x)\equiv\int_{-1}^{x}\rho(y)\,dy,\quad F_\rho(1)=1
\end{equation}
which is non-decreasing (with $F_\rho(-1)=0$ unless $\rho$ has a delta component at $x=-1$, in which case $F_\rho(-1)$ equals its weight), and in terms of which \eqref{eq:rho-prob} becomes
\begin{equation}\label{eq:F-prob}
\PP(\rr\to\nn_1|\rho)=F_\rho(\xr),\quad \PP(\rr\to\nn_2|\rho)=1-F_\rho(\xr)
\end{equation}
In other words, the probability model of a two-outcome $\rho$-measurement, i.e., the transition probability as a function of the initial state, is nothing but the cumulative function of $\rho$, and the quantum probability model corresponds to the linear function $F_u(x)=(1+x)/2$. We will make repeated use of this simple observation.

As in \cite{AS2014a}, we allow $\rho$ to contain Dirac delta components. The density $\rho(x)=\delta(x-x_0)$ describes a band that can only disintegrate at the predetermined point $x_0$, which corresponds to a \emph{deterministic} measurement. Indeed, given the state, the outcome is predetermined: $a_1$ if $\xr>x_0$ and $a_2$ if $\xr<x_0$. The density $c_1\delta(x+1)+c_2\delta(x-1)$ describes a so-called \emph{solipsistic} measurement, whose outcome probabilities ($c_1$ and $c_2$) do not depend on the state. The density $\rho_\epsilon(x)=\frac{1}{2\epsilon}\ind_{[-\epsilon,\epsilon]}(x)$ describes the $\epsilon$-model, which interpolates between a deterministic measurement ($\epsilon\to0$) and the quantum one ($\epsilon\to1$) and which, for $0<\epsilon<1$, is neither classical nor quantum \cite{Aerts1993, Aerts1994, AertsDurt1994, Aerts1995}. 

Note that, mathematically, a density with delta components is just a probability measure on $[-1,1]$. When $\rho$ has delta components, the integrals over the regions $A_1$ and $A_2$ in \eqref{eq:rho-prob} are to be understood with the point $\xr$ assigned to one of the two regions, say $A_1=[-1,\xr]$ and $A_2=(\xr,1]$; which one is irrelevant for all the averages considered below, since a delta component sitting exactly at $\xr$ corresponds to the exceptional situation of unstable equilibrium discussed in \cite{AS2014a}.

\section{Averaging over cellular densities}\label{sec:2014}

The question addressed in Section~13 of \cite{AS2014a} was the following. The Born rule \eqref{eq:born} is obtained when the band is uniformly disintegrable. But why would the band be uniform? The hidden-measurement interpretation offers a natural answer: a quantum measurement is a situation of \emph{maximal lack of knowledge}, not only about which interaction $\bm{\lambda}$ is actualized, but also about \emph{the way} the interaction is actualized, i.e., about the density $\rho$. If nothing is known about $\rho$, all the $\rho$ should be considered as equally possible, and the measurement that is actually performed should be described by an average over all of them. To put it another way, if nothing in an experiment allows one to control the way an outcome is chosen, all possible ways (all possible $\rho$) are a priori possible and must be taken into account when calculating probabilities.  We called such a measurement a \emph{universal measurement}, and we conjectured, and then proved in a specific sense, that the universal average of the outcome probabilities \eqref{eq:rho-prob} is exactly the Born rule.

The definition adopted in \cite{AS2014a}  was the following. The interval $[-1,1]$ is partitioned into $n$ \emph{elementary cells} of equal length $2/n$, and one considers the \emph{cellular} probability densities
\begin{equation}\label{eq:cellular}
\rho_n(x)=\frac{n}{2\, n^d}\,\chi_n(x),
\end{equation}
where $\chi_n(x)$ is a step function taking the value $1$ in the $n^d$ disintegrable cells, $n^d\ge1$, and $0$ in the remaining $n-n^d$ non-disintegrable cells (see Fig.~9 of \cite{AS2014a}; note that $d$ is just a superscript, not an exponent, indicating that $n^d$ refers to the disintegrable cells). Such densities describe bands which can only break in the disintegrable cells, uniformly within them. For a given $n$, there are exactly $2^n-1$ different cellular densities, and one defines the \emph{average probability} at level $n$ by giving the same weight to each of them:
\begin{equation}\label{eq:avg2014}
\avg{\PP(\rr\to\nn_i|\rho_n)}\equiv\frac{1}{2^n-1}\sum_{\rho_n}\PP(\rr\to\nn_i|\rho_n),\quad i=1,2
\end{equation}
where $\PP(\rr\to\nn_i|\rho_n)=\PP(x\in A_i|\rho_n)=\int_{A_i}\rho_n(x)dx$. The universal average is then defined as the infinite-cell limit
\begin{equation}\label{eq:univ2014}
\avg{\PP(\rr\to\nn_i)}^{\rm univ}\equiv\lim_{n\to\infty}\avg{\PP(\rr\to\nn_i|\rho_n)},\quad i=1,2
\end{equation}
and the theorem proved in \cite{AS2014a} states that
\begin{equation}\label{eq:thm2014}
\avg{\PP(\rr\to\nn_i)}^{\rm univ}=\PP(\rr\to\nn_i|\rho_u)=\frac{1+\rr\cdot\nn_i}{2},\quad i=1,2
\end{equation}
i.e., that the universal average reproduces the Born rule. 

The proof had two steps. In the first (Section~13.1 of \cite{AS2014a}), it was shown that cellular densities are \emph{dense} among all densities, in the sense that for every $\rho$ one can find cellular densities $\rho_n$ such that $\PP(\rr\to\nn_i|\rho_n)\to\PP(\rr\to\nn_i|\rho)$ as $n\to\infty$. The purpose of this step was to make sure that the average \eqref{eq:univ2014} could legitimately be considered as an average over \emph{all} densities. In the second step (Section~13.2), it was shown, by induction and using two combinatorial identities (Eqs.~(108) and (113) of \cite{AS2014a}), that for intervals $A_i$ made of whole cells
\begin{equation}\label{eq:avg-n-uniform}
\avg{\PP(\rr\to\nn_i|\rho_n)}=\PP(\rr\to\nn_i|\rho_{u;n}),\quad i=1,2
\end{equation}
where $\rho_{u;n}$ is the cellular density whose $n$ cells are all disintegrable, i.e., the uniform density, from which \eqref{eq:thm2014} follows in the limit.

This construction raises a number of questions, which we now formulate in our own words (they are essentially those raised by Shackel \cite{Shackel2024}).

\begin{enumerate}[label=(Q\arabic*),leftmargin=*]
\item \emph{Why these particular cellular densities?} The construction uses cells of equal length, densities taking only two values (one of which is zero), and equal weights for all of them. If a different discretization were used, say cells of different lengths, or densities with three or more values, or unequal weights, would the result be the same? Note that as soon as three values are allowed there are uncountably many cellular densities with $n$ cells, so that ``giving the same weight to each of them'' no longer corresponds to a finite average.
\item \emph{What does the limit add?} Equation \eqref{eq:avg-n-uniform} holds for every $n$ and the sequence of averages is constant. In other words, the limit $n\to\infty$ in \eqref{eq:univ2014} is the limit of a constant sequence. In what sense, then, is the result ``universal,'' and not just ``trivial''?
\item \emph{Is the average over ``all'' densities?} The density argument of the first step shows that every $\rho$ can be approximated by cellular densities, but this does not, by itself, make an average over the cellular densities of level $n$, followed by a limit, an average over all densities. Moreover, as Shackel pointed out, the approximation argument of \cite{AS2014a} was written with an ambiguous order of quantifiers (the approximating sequence was allowed to depend on the interval considered).
\item \emph{What is the role of the length?} The density $\rho_u$ is uniform with respect to the \emph{length} along the band, and equal cells are cells of equal length. If the band were parameterized differently, by a coordinate $y=h(x)$, the uniform density in $x$ would not be uniform in $y$. Where does the length come from, and is it an arbitrary choice?
\end{enumerate}

In the following sections we answer these questions. The answer to all of them rests on a single idea: the universal average is an \emph{invariant} average.

\section{Random $\rho$-measurements}\label{sec:random}

As we have explained, a $\rho$-measurement is a measurement in which a potential interaction $\bm{\lambda}$ is randomly actualized through a process weighted by the density $\rho$. A universal measurement, on the other hand, is a measurement in which the density $\rho$ is itself actualized at random, each time the measurement is performed, among all the possible (potential) densities. To describe such a measurement, we must specify the probability law according to which $\rho$ is selected. Let $\U$ denote this law. We shall then say that a $\U$-measurement is a universal measurement characterized by $\U$, in the same way as a $\rho$-measurement is a measurement characterized by $\rho$. In other words, a $\U$-measurement is a \emph{random $\rho$-measurement}, or, as we also said in \cite{AS2014a}, a ``measurement of measurements.''

For instance, the average \eqref{eq:avg2014} corresponds to the  $\U_n$-measurement in which $\rho$ is selected, with the same probability $\frac{1}{2^n-1}$, among the $2^n-1$ cellular densities with $n$ cells. The law $\U_n$ is a probability law on a finite set of densities, but in general $\U$ can be a probability law on an infinite set of densities, discrete or continuous. In other words, several universal measurements can a priori be defined, each characterized by a specific set of potential probability densities that can be randomly actualized, and by the probabilities with which they are actualized.

Given a universal measurement $\U$, the probability of the outcome $a_i$ is obtained by averaging the probability \eqref{eq:rho-prob} over the random $\rho$, i.e., by taking the \emph{mean value} of $\PP(\rr\to\nn_i|\rho)$ with respect to the law $\U$, which we denote by
\begin{equation}\label{eq:univ-prob}
\avgU{\PP(\rr\to\nn_i|\rho)}\equiv\int\PP(\rr\to\nn_i|\rho)\,d\U(\rho),\quad i=1,2
\end{equation}
For a finite law like $\U_n$ the integral is just the sum \eqref{eq:avg2014}. The following simple observation is the key to everything that follows. Since, according to (\ref{eq:rho-prob}), $\PP(\rr\to\nn_i|\rho)$ depends \emph{linearly} on $\rho$, the mean value of the probability is the probability of the mean density. So, if we define the \emph{average density} of the universal measurement $\U$ as
\begin{equation}\label{eq:avg-density}
\rhobar_\U\equiv\avgU{\rho}=\int\rho\,d\U(\rho)
\end{equation}
which is again a probability density on $[-1,1]$, then, exchanging the order of the two integrations, \eqref{eq:univ-prob} becomes
\begin{equation}\label{eq:key}
\avgU{\PP(\rr\to\nn_1|\rho)}=\int_{-1}^{\xr}\rhobar_\U(x)\,dx=\PP(\rr\to\nn_1|\rhobar_\U)
\end{equation}
and similarly for $\nn_2$. In other words, from the point of view of the outcome probabilities, a universal measurement is an ordinary $\rho$-measurement, whose density is the average density, or, to put it more succinctly: a $\U$-measurement is a $\rhobar_\U$-measurement. 

The question in the title (are quantum measurements universal?) therefore becomes a question about average densities: \emph{for which universal measurements $\U$ is the average density $\rhobar_\U$ the uniform density $\rho_u$?} The answer, as we are now going to show, is: for all those which are \emph{indifferent}, in a sense that we now define.

\section{Indifferent universal measurements}\label{sec:indifference}

We consider the band $\simp$, which spans the interval $[-1,1]$, and divide it into $n$ \emph{cells} $\sigma^{(n)}_j$ of equal length $2/n$:
\begin{equation}\label{eq:cells}
\sigma^{(n)}_j=\Big[-1+(j-1)\frac2n,\ -1+j\frac2n\Big),\quad j=1,\dots,n
\end{equation}
the last cell being closed at $x=1$. For a density $\rho$, let
\begin{equation}\label{eq:masses}
m_j(\rho)\equiv\int_{\sigma^{(n)}_j}\rho(x)\,dx,\quad j=1,\dots,n,\quad \sum_{j=1}^n m_j(\rho)=1
\end{equation}
be the \emph{cell masses} of $\rho$, giving the probabilities that the band disintegrates in the different cells. Now imagine that the band, made of the non-uniform substance described by $\rho$, is cut along the cell boundaries and that the $n$ pieces are glued again in a different order, specified by a permutation $\pi$ of $\{1,\dots,n\}$: the piece that was in position $j$ is moved to position $\pi(j)$, $j=1,\dots,n$ (see Fig.~\ref{fig:cut}). The new band is described by a new density, which we denote $\rho^\pi$, whose cell masses are a permutation of those of $\rho$:
\begin{equation}\label{eq:rearranged}
m_{\pi(j)}(\rho^\pi)=m_j(\rho),\quad j=1,\dots,n
\end{equation}
and whose profile inside each cell is the profile that $\rho$ had in the piece that has been moved there. We call this operation a \emph{rearrangement} of the band. Note that a rearrangement is a bijection of the band onto itself which preserves lengths (it is a piecewise translation), and that it induces a bijection $\rho\mapsto\rho^\pi$ of the set of densities onto itself, under which genuine densities are mapped to genuine densities, delta components to delta components, and the uniform density to itself.
\begin{figure}[t]
\centering
\begin{tikzpicture}[scale=0.9]
\foreach \j/\c in {1/red!30,2/orange!40,3/yellow!50,4/green!30,5/cyan!30,6/blue!25,7/violet!30,8/gray!40}{
  \fill[\c] ({(\j-1)*1.1},0) rectangle ({\j*1.1},0.5);
  \draw ({(\j-1)*1.1},0) rectangle ({\j*1.1},0.5);
  \node at ({(\j-1)*1.1+0.55},0.25) {\small $\j$};
}
\node[left] at (0,0.25) {$\rho$:};
\foreach \j/\c/\lab in {1/cyan!30/5,2/red!30/1,3/gray!40/8,4/yellow!50/3,5/blue!25/6,6/orange!40/2,7/violet!30/7,8/green!30/4}{
  \fill[\c] ({(\j-1)*1.1},-1.2) rectangle ({\j*1.1},-0.7);
  \draw ({(\j-1)*1.1},-1.2) rectangle ({\j*1.1},-0.7);
  \node at ({(\j-1)*1.1+0.55},-0.95) {\small \lab};
}
\node[left] at (0,-0.95) {$\rho^\pi$:};
\draw[->,>=Latex,thick] (4.4,-0.1) -- (4.4,-0.6);
\end{tikzpicture}
\caption{A rearrangement of the band: the $n=8$ cells of equal length are cut and glued again in a different order. The density $\rho$ of the band becomes the density $\rho^\pi$, whose cell masses are a permutation of those of $\rho$.}
\label{fig:cut}
\end{figure}

A universal measurement selects $\rho$ at random. We say that it is \emph{indifferent} if it does not privilege any portion of the band, i.e., if it ignores which piece of the band is which: selecting $\rho$ or selecting its rearranged version $\rho^\pi$ must be, for an indifferent universal measurement, one and the same thing.

\begin{definition}[indifferent universal measurement]\label{def:indifferent}
A $\U$-measurement is \emph{indifferent} if, for every $n$ of the dyadic form $n=2^k$, $k=1,2,3,\dots$, and for every permutation $\pi$ of the $n$ cells \eqref{eq:cells}, the probability law of the random measure $\rho$ is invariant under the rearrangement map $T_\pi:\rho\mapsto\rho^\pi$; equivalently, $\U(T_\pi^{-1}E)=\U(E)$ for every measurable set $E$ of probability measures. In particular, the joint probability law of the $n$ cell masses $(m_1(\rho),\dots,m_n(\rho))$ is symmetric under any permutation of the cells.\footnote{The restriction to $n=2^k$ is only a matter of convenience: dyadic cells are nested (each cell of level $k$ is the union of two cells of level $k+1$), which simplifies the proofs. Any other sequence of nested partitions into cells of equal length would do, and the condition could equally be imposed for all $n$.}
\end{definition}

\noindent In the language of probability theory, the cell masses of an indifferent universal measurement form an \emph{exchangeable} family of random variables, at every level $k$. Physically, Definition~\ref{def:indifferent} is the transcription of the statement ``nothing in the universal measurement allows one to attribute any preference to some regions of the band over others'': an indifferent universal measurement is the situation of \emph{maximal lack of knowledge about `the way' the interaction is selected}, in the same way as the uniform $\rho_u$ is the situation of maximal lack of knowledge about `the' interaction itself.

The universal measurement $\U_n$ given in  \cite{AS2014a} is indifferent at its own level $n$: a rearrangement maps a cellular density with $n^d$ disintegrable cells to another cellular density with the same number of disintegrable cells, so that the uniform law on the $2^n-1$ cellular densities is unchanged.\footnote{It is not indifferent, however, at the finer levels $2n,4n,\dots$, since the two halves of a disintegrable cell always carry the same mass; we shall come back to this in Section~\ref{sec:finite}.} But the definition is much more general: it does not refer to any particular family of densities, finite or infinite, to any discretization, or to any limit. It is a property of the law $\U$ alone.

\begin{theorem}[uniqueness of the universal average]\label{thm:uniqueness} 
Let $\U$ be an indifferent universal measurement. Then its average density is the uniform density:
\begin{equation}\label{eq:thm1}
\rhobar_\U(x)=\avgU{\rho(x)}=\rho_u(x)=\tfrac12,
\quad \text{for a.e. }x\in[-1,1]
\end{equation}
Consequently, for every initial state $\rr$,
\begin{equation}\label{eq:thm1-born}
\avgU{\PP(\rr\to\nn_i|\rho)}=\PP(\rr\to\nn_i|\rho_u)=\frac{1+\rr\cdot\nn_i}{2},\quad i=1,2
\end{equation}
In other words, the outcome probabilities of an indifferent universal measurement are the quantum probabilities, whatever the universal measurement.
\end{theorem}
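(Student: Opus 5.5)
The plan is to show that the average measure $\rhobar_\U$ gives every dyadic cell its length-proportional mass $1/n$. A probability measure is determined by its values on dyadic intervals, so this will force $\rhobar_\U=\rho_u$; the Born rule \eqref{eq:thm1-born} then follows at once from the key identity \eqref{eq:key}.

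First, I fix a level $n=2^k$ and the cells \eqref{eq:cells}. By Definition~\ref{def:indifferent}, the joint law of $(m_1(\rho),\dots,m_n(\rho))$ is invariant under every permutation of the cells. Taking the transposition $\pi=(i\,j)$ shows that $m_i(\rho)$ and $m_j(\rho)$ have the same law under $\U$. Each mass lies in $[0,1]$ and is therefore integrable, so $\avgU{m_i(\rho)}=\avgU{m_j(\rho)}$ for all $i,j$. Averaging the constraint \eqref{eq:masses} then gives $\sum_j\avgU{m_j(\rho)}=1$, and hence $\avgU{m_j(\rho)}=1/n$.

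Second, I identify $\avgU{m_j(\rho)}$ with the mass that the average measure assigns to the cell. By Fubini/Tonelli (exactly the exchange of integrations behind \eqref{eq:key}), $\rhobar_\U(\sigma^{(n)}_j)=\int m_j(\rho)\,d\U(\rho)=1/n=\int_{\sigma^{(n)}_j}\rho_u\,dx$. Here I must be careful to treat $\rhobar_\U$ as a probability measure, since it might a priori contain atoms. The measurability of $\rho\mapsto m_j(\rho)$ has to be taken as part of the standing assumption that $\U$ is a law on probability measures with the usual Borel structure.

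Third, I extend from cells to all intervals. Finite unions of dyadic cells (half-open, with the last one closed) over all $k$ form a $\pi$-system generating the Borel $\sigma$-algebra of $[-1,1]$. Two probability measures agreeing on it coincide, by Dynkin's $\pi$-$\lambda$ theorem; alternatively, their cumulative functions agree at all dyadic points, and right-continuity together with density of those points gives $F_{\rhobar_\U}=F_u$ everywhere. Thus $\rhobar_\U$ is Lebesgue measure divided by $2$, which is absolutely continuous with density $1/2$. This yields \eqref{eq:thm1} a.e., and in particular shows that $\rhobar_\U$ has no atoms even though individual $\rho$ may have them.

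Finally, since $F_{\rhobar_\U}(x)=(1+x)/2$ for every $x$, formulas \eqref{eq:key} and \eqref{eq:F-prob} give $\avgU{\PP(\rr\to\nn_1|\rho)}=(1+\xr)/2$, and the $\nn_2$ case follows by complement. The absence of atoms in $\rhobar_\U$ also makes the convention of assigning $\xr$ to $A_1$ or $A_2$ irrelevant.

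The main obstacle is not the symmetry argument, which is elementary, but the measure-theoretic bookkeeping. One must make precise that the rearrangement invariance implies equality in law of individual cell masses. One must also justify the Fubini exchange for a random measure and handle boundary points and atoms correctly. I would settle this by working throughout with cumulative functions at dyadic points and using right-continuity.
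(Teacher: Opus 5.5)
Your proposal is correct and follows the paper's proof essentially step by step. You use the transposition of two cells to get equal laws, hence equal means, of the cell masses; normalization then gives each dyadic cell mass $1/n$ under the average measure; you extend from the dyadic cells to the whole measure and finish with the key identity \eqref{eq:key}. The only cosmetic difference is the extension step: you invoke Dynkin's $\pi$-$\lambda$ theorem or right-continuity of the cumulative function, whereas the paper sandwiches $\bar F(x)$ between its values at consecutive dyadic points and uses the continuity of $F_u$.
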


\begin{proof}
Let us consider the $n=2^k$ cells of level $k$. The cell masses of the average density are the averages of the cell masses of $\rho$:\begin{equation}\label{eq:avg-masses}
m_j(\rhobar_\U)=\int_{\sigma_j^{(n)}}\avgU{\rho(x)}\,dx=\avgU{m_j(\rho)},\quad j=1,\dots,n
\end{equation}
Since $\U$ is indifferent, the two random variables $m_j(\rho)$ and $m_l(\rho)$, for any two cells $j$ and $l$, take any given range of values with the same probability. Indeed, let $\pi$ be the rearrangement which exchanges the cells $j$ and $l$ (leaving the others in place). By \eqref{eq:rearranged}, the mass of the cell $l$ after the rearrangement is the mass that the cell $j$ had before it, $m_l(\rho^\pi)=m_j(\rho)$; and since $\rho^\pi$ is selected with the same probability as $\rho$, the random variable $m_l(\rho^\pi)$ takes its values with the same probabilities as $m_l(\rho)$. Hence $m_j(\rho)$ and $m_l(\rho)$ take their values with the same probabilities. In particular, this means that they have the same mean value, $\avgU{m_j(\rho)}=\avgU{m_l(\rho)}$. So the $n$ numbers $m_j(\rhobar_\U)$ are all equal, and since they sum to $1$, each of them is equal to $1/n$:
\begin{equation}\label{eq:equal-masses}
m_j(\rhobar_\U)=\frac1n=\frac{\text{length of }\sigma^{(n)}_j}{2},\quad j=1,\dots,n
\end{equation}
In other words, the average density gives to every cell, of every level, a mass equal to its length divided by the total length of the band, which is what the uniform density does. It remains to see that this forces $\rhobar_\U$ to be the uniform density everywhere, not only on the cells. For this, we use the cumulative function \eqref{eq:F} of $\rhobar_\U$, which we denote $\bar F(x)$. By \eqref{eq:equal-masses}, at every dyadic point $x=-1+j\frac{2}{2^k}$ of the band (the right end of the $j$-th cell of level $k$), $\bar F(x)$ is the sum of the masses of the first $j$ cells, so that
\begin{equation}\label{eq:F-dyadic}
\bar F(x)=\frac{j}{2^k}=\frac{1+x}{2}=F_u(x)\quad\text{at every dyadic point }x
\end{equation}
Take now an arbitrary point $x\in[-1,1]$ and, for each level $k$, the two consecutive dyadic points $x'_k\le x\le x''_k$ of that level, which are at distance $2/2^k$ from each other. Monotonicity of the cumulative function gives $F_u(x'_k)=\bar F(x'_k)\le\bar F(x)\le\bar F(x''_k)=F_u(x''_k)$, and letting $k\to\infty$, since $F_u$ is continuous, $\bar F(x)=F_u(x)=\frac{1}{2}(1+x)$ for every $x$. Two probability measures with the same cumulative function are identical. In particular, two ordinary densities with the same cumulative function coincide almost everywhere. Hence $\rhobar_\U=\rho_u$ (as probability measures, or as densities almost everywhere).
\end{proof}

\noindent It is instructive to restate the argument as a proof by contradiction, as this is the form in which it directly expresses the physical intuition. Suppose that the average density of an indifferent universal measurement were \emph{not} uniform. Then, since the dyadic cells determine the density, there would be a level $k$ and a cell receiving an average mass different from $\frac1n$, hence two cells $j$ and $l$ of that level receiving different average masses, $\avgU{m_j(\rho)}\neq\avgU{m_l(\rho)}$. But the universal measurement is indifferent: the rearrangement which exchanges the pieces $j$ and $l$ of the band does not change the law of $\rho$, and therefore does not change the average mass of any cell; applying it, one gets $\avgU{m_j(\rho)}=\avgU{m_l(\rho)}$, a contradiction. In other words, a non-uniform average density would mean that the universal measurement gives more weight to some piece of the band than to another piece of the same length, whereas by hypothesis it does not distinguish between them.

Theorem~\ref{thm:uniqueness} answers the question (Q1). The proof uses two ingredients only: the symmetry of the law $\U$ under rearrangements, and the linearity of the outcome probabilities in $\rho$. It is completely insensitive to the kind of densities that $\U$ selects (two-valued, many-valued, continuous, with or without delta components), to the number of them (finite or infinite), to the weights given to them (which must be equal only for densities that are rearrangements of one another, and are otherwise arbitrary; see Section~\ref{sec:finite}), to the resolution, and to the existence of a limit. Of the ``three constraints'' of the construction in \cite{AS2014a}, only one matters: the cells have the same \emph{length} and are treated symmetrically. This is the constraint that encodes indifference. The other two were conveniences of presentation. As for the question (Q4), the theorem makes explicit where the length enters: in the very definition of indifference, through the cells of equal length. We shall see in Section~\ref{sec:length} that this is not a defect but a necessity.

The theorem also makes precise the sense in which the universal average is \emph{unique}: the average density is the same, $\rho_u$, for \emph{all} indifferent universal measurements. There is therefore no need to specify a particular ``way of choosing $\rho$ at random among all the possible $\rho$'': all the indifferent ways give the same answer. This is what dissolves the regress that one might fear when ``averaging over all the ways of selecting an interaction'': the choice of the way of averaging (which could seem to call for a further average over all the ways of averaging, and so on) is immaterial, as long as it is indifferent. We shall write, accordingly,
\begin{equation}\label{eq:univ-new}
\avg{\PP(\rr\to\nn_i|\rho)}^{\rm univ}\equiv\avgU{\PP(\rr\to\nn_i|\rho)},\quad\U\text{ any indifferent universal measurement}
\end{equation}
which extends and makes independent of the construction the definition \eqref{eq:univ2014} of \cite{AS2014a}.

Before proceeding, a word on a possible objection. Theorem~\ref{thm:uniqueness} is a symmetry theorem, and like all symmetry theorems it may look, once stated, almost tautological: an average which treats all pieces of the band alike gives a density which treats all pieces of the band alike, and this is the uniform density. In a sense this is true, and it is also the whole point. What is not tautological is that the result holds for \emph{every} indifferent selection law, without privileging any family of densities, prescribing any particular way of weighing them, or requiring a limiting procedure. Moreover, indifference is a genuine restriction, not a reformulation of the Born rule: as we shall see in Section~\ref{sec:length}, some random selections are not indifferent and do not yield the Born rule, while others may yield it without being indifferent. And Section~\ref{sec:rigidity} will reveal the remarkable rigidity of this condition: no non-trivial indifferent selection of densities without deterministic components is possible. Finally, the theorem also clarifies what indifference does \emph{not} determine: the notion of length on the band (Section~\ref{sec:length}), while the derivation of the Born probabilities additionally requires the orthogonal fall of the particle (Section~\ref{sec:general}). What the universal average achieves is therefore precise: it replaces the direct hypothesis of uniform disintegration by a symmetry-based hypothesis of indifference at the level of the selection of the measurement, and proves that the latter implies the former. That this implication may seem almost obvious once the relevant symmetry has been identified does not make it devoid of content.

\section{Universal measurements at finite resolution}\label{sec:finite}

Theorem~\ref{thm:uniqueness} tells us what happens when a universal measurement is indifferent at all resolutions. The universal measurements $\U_n$ of \cite{AS2014a}, however, live at a single, finite resolution: they select among densities which are constant inside each of the $n$ cells of a given level, and they are indifferent with respect to the rearrangements of these $n$ cells only. It is therefore natural to ask what indifference gives at a single level, and the answer turns out to be even simpler than Theorem~\ref{thm:uniqueness}: the uniform density is obtained \emph{exactly}, at every finite $n$, with no limit to be taken, and this for every indifferent way of selecting cellular densities, not only for the uniform selection among two-valued ones adopted in \cite{AS2014a}. The reason is that, at a given resolution, the densities can be grouped into families of densities that are rearrangements of one another, and that the average of each family is already the uniform density. This is the key to the question (Q2), and it also explains why the proof of \cite{AS2014a} could be carried out by induction.

We fix $n$ and consider the set of all densities which are constant inside each of the $n$ cells \eqref{eq:cells}: we call them \emph{cellular densities of level $n$}, in a sense slightly more general than that of \cite{AS2014a}, since we now allow arbitrary values in the cells. Such a density is completely specified by its $n$ cell masses $p=(p_1,\dots,p_n)$, $p_j\ge0$, $\sum_jp_j=1$:
\begin{equation}\label{eq:cellular-general}
\rho_p(x)=\frac n2\sum_{j=1}^np_j\,\ind_{\sigma_j^{(n)}}(x)
\end{equation}
where $\ind_{\sigma_j^{(n)}}$ is the characteristic function of $\sigma_j^{(n)}$, and $\frac n2\ind_{\sigma_j}$ is the uniform density on the $j$-th cell. The two-valued cellular densities \eqref{eq:cellular} are those for which the $p_j$ take only the values $0$ and $1/n^d$. A rearrangement $\pi$ of the cells maps $\rho_p$ to $\rho_{p^\pi}$, with $p^\pi_{\pi(j)}=p_j$. We can then define the \emph{orbit} of $\rho_p$ as the set of the $n!$ densities $\rho_{p^\pi}$ obtained from $\rho_p$ by all possible rearrangements (some of which may coincide).

\begin{theorem}[finite resolution]\label{thm:finite}
For every cellular density $\rho_p$ of level $n$, the average of its orbit is the uniform density:
\begin{equation}\label{eq:orbit}
\frac{1}{n!}\sum_{\pi}\rho_{p^\pi}=\rho_u
\end{equation}
Consequently, for every universal measurement $\tilde{\U}_n$ which selects cellular densities of level $n$ in a way that is invariant under the rearrangements of the $n$ cells, we have $\rhobar_{\tilde{\U}_n}=\rho_u$. In particular, this holds for the universal measurement $\U_n$, which proves \eqref{eq:avg-n-uniform} for all $n$ and for all intervals $A_i$, not only those made of whole cells.
\end{theorem}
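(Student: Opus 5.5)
The orbit identity \eqref{eq:orbit} reduces to a counting argument, and I would then pass to invariant laws by the linearity of Section~\ref{sec:random}. Fix a cell $l$ and compute the mass that the orbit average gives to $\sigma^{(n)}_l$. Since $p^\pi_{\pi(j)}=p_j$, we have $p^\pi_l=p_{\pi^{-1}(l)}$. As $\pi$ ranges over all $n!$ permutations, $\pi^{-1}(l)$ takes each value $j\in\{1,\dots,n\}$ exactly $(n-1)!$ times. Hence
\begin{equation*}
\frac1{n!}\sum_\pi p^\pi_l=\frac{(n-1)!}{n!}\sum_{j=1}^n p_j=\frac1n .
\end{equation*}
Inserting this into \eqref{eq:cellular-general}, and exchanging the finite sums, gives
\begin{equation*}
\frac1{n!}\sum_\pi\rho_{p^\pi}=\frac n2\sum_l\frac1n\,\ind_{\sigma_l^{(n)}}=\frac12=\rho_u
\end{equation*}
everywhere on $[-1,1]$, and not merely on whole cells. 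The point is that cellular densities are constant inside each cell, so equal cell masses force the density itself to be exactly $1/2$. No approximation by dyadic points, as in Theorem~\ref{thm:uniqueness}, is needed.

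Next I take an invariant law $\tilde\U_n$ on cellular densities of level $n$, viewed as a law on the simplex of mass vectors $p$ that is invariant under $p\mapsto p^\pi$ for each $\pi$. Invariance gives $\avg{\rho_p}_{\tilde\U_n}=\avg{\rho_{p^\pi}}_{\tilde\U_n}$ for each $\pi$. Averaging this over all $\pi$ and applying Fubini, which is trivial here since the sum over $\pi$ is finite and everything is bounded by $n/2$, I get
\begin{equation*}
\rhobar_{\tilde\U_n}=\Big\langle \frac1{n!}\sum_\pi\rho_{p^\pi}\Big\rangle_{\tilde\U_n}=\avg{\rho_u}_{\tilde\U_n}=\rho_u .
\end{equation*}
Equivalently, one can decompose $\tilde\U_n$ into orbits, each of which averages to $\rho_u$. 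This is the step where some care with measurability is needed. I would avoid any disintegration over orbits and use only the pointwise identity \eqref{eq:orbit}, integrated against the invariant law. That requires only that $p\mapsto p^\pi$ is measurable, which holds because it is a linear map.

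Finally, for the law $\U_n$ of \cite{AS2014a}, I would note, as was already done after Definition~\ref{def:indifferent}, that it is invariant under rearrangements of the $n$ cells. This is because a rearrangement preserves $n^d$ and is a bijection on the finite set of $2^n-1$ two-valued densities. The previous step then gives $\rhobar_{\U_n}=\rho_u$. By \eqref{eq:key}, $\avg{\PP(\rr\to\nn_i|\rho_n)}=\int_{A_i}\rho_u\,dx=\PP(\rr\to\nn_i|\rho_u)$ for every $\xr$, since the equality holds as densities and not just on cell masses. This proves \eqref{eq:avg-n-uniform} for arbitrary intervals $A_i$. I expect no real obstacle. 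The only subtle point is to emphasize that the exact identity holds pointwise because of the cellular structure, which is why no limit is needed.
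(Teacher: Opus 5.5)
Your proposal is correct and follows essentially the paper's proof. The direct count of the $(n-1)!$ permutations with $\pi(j)=l$ is exactly the alternative the paper gives in its footnote; the main text instead notes that the orbit average is itself rearrangement-invariant, hence has equal cell masses. Your passage to an invariant law $\tilde{\U}_n$, by averaging $\avg{\rho}_{\tilde{\U}_n}=\avg{\rho^\pi}_{\tilde{\U}_n}$ over $\pi$ and then specializing to $\U_n$ via \eqref{eq:key}, is the same as in the paper.
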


\begin{proof}
Consider any rearrangement of the $n$ cells. Applying it to the orbit average \eqref{eq:orbit} merely permutes the $n!$ terms of its sum, and the orbit average remains unchanged, hence it is a cellular density invariant under all rearrangements. Its $n$ cell masses are therefore all equal, and since they sum to $1$, each of them is $1/n$, i.e., the orbit average is $\rho_u$.\footnote{The same conclusion can be reached by direct counting. By \eqref{eq:rearranged}, the mass of the $j$-th cell of $\rho_{p^\pi}$ is $p_{\pi^{-1}(j)}$, and for each index $l$ there are exactly $(n-1)!$ permutations $\pi$ with $\pi^{-1}(j)=l$, namely those with $\pi(l)=j$, the other $n-1$ indices being permuted freely; so $\frac{1}{n!}\sum_\pi p_{\pi^{-1}(j)}=\frac{(n-1)!}{n!}\sum_lp_l=\frac1n$.} The law $\tilde{\U}_n$ being invariant under rearrangements, we have $\avg{\rho}_{\tilde{\U}_n}=\avg{\rho^\pi}_{\tilde{\U}_n}$ for every $\pi$, so that averaging over all $\pi$ and using \eqref{eq:orbit}, gives $\avg{\rho}_{\tilde{\U}_n}=\avg{\frac{1}{n!}\sum_\pi\rho^\pi}_{\tilde{\U}_n}=\avg{\rho_u}_{\tilde{\U}_n}=\rho_u$. The law $\U_n$ is clearly a special case of $\tilde{\U}_n$, and since $\rhobar_{\U_n}=\rho_u$ as densities, the probabilities $\PP(x\in A|\rhobar_{\U_n})$ and $\PP(x\in A|\rho_u)$ coincide for every interval $A$, by \eqref{eq:key}.
\end{proof}

\noindent According to Theorem~\ref{thm:finite}, every orbit has the uniform density as its center of mass, so that the average over all of them is uniform whatever the weights given to the different orbits. The proof by induction of \cite{AS2014a} and the combinatorial identities obtained there are just the shadow of this symmetry in the particular family of two-valued densities (see Appendix~\ref{app:comb}).

The answer to (Q2) is now clear. The sequence of averages of \cite{AS2014a}  is constant because the symmetry already fixes the average at every finite resolution; there is nothing to be gained, as far as the \emph{value} of the average is concerned, from the limit $n\to\infty$. The triviality of the constant sequence is not a defect of the result but its very content: the value $\rho_u$ is exact, independent of the resolution, and independent of the family, which is what one expects from a structural fact, like the fact that the center of mass of a symmetric body lies on its axis of symmetry. What the limit $n\to\infty$ was meant to secure in \cite{AS2014a}  is something else, namely that \emph{every} density is reachable, so that the average can be said to be over all of them. This is the question (Q3), to which we now turn.

Every genuine density $\rho$ is the limit, as $n\to\infty$, of its \emph{histogram} of level $n$, i.e., of the cellular density $\rho_{p}$ whose cell masses are those of $\rho$, $p_j=m_j(\rho)$. The convergence holds in the strongest sense that matters for probabilities: since $\int|\rho_p-\rho|dx\to0$ (convergence in the $L^1$ norm, a standard result of analysis), the difference between $\int_A\rho_p$ and $\int_A\rho$ tends to zero \emph{uniformly} over all regions $A$. So the union of all the finite levels, with the general cellular densities \eqref{eq:cellular-general}, exhausts all genuine densities, and one single sequence (the histograms) works for all regions at once. For the two-valued densities of \cite{AS2014a}, which cannot approximate a density in the $L^1$ sense (they take only two values), one can still construct, for every density $\rho$, a single sequence of two-valued cellular densities $\rho_n$ such that $\PP(x\in A|\rho_n)\to\PP(x\in A|\rho)$ for every interval $A$, with an explicit rate. This is done in Appendix~\ref{app:approx}, and it settles the quantifier issue raised by Shackel about the first step of the proof in \cite{AS2014a}.

But the point we want to stress is that, in the present formulation, the computation of the universal average no longer rests upon this approximation property. The value of the universal average is fixed by Theorems~\ref{thm:uniqueness} and \ref{thm:finite}, at every resolution and for every indifferent law, without any limit to interchange with an average. The approximation property only tells us that the family of finite resolutions leaves no density out. And in the next section we shall see that the words ``all the possible $\rho$'' can be given an even more literal meaning, not through a limit of finite families, but through a single indifferent universal measurement which reaches all densities at once.

\section{Mixtures of deterministic measurements}\label{sec:pure}

The original idea of \cite{Aerts1998,Aerts1999} was that a universal measurement is a random choice among \emph{all} imaginable types of measurement: ``deterministic classical measurements, quantum measurements, super-quantum and sub-quantum measurements.'' Deterministic measurements are, in the present language, those of the form $\rho(x)=\delta(x-z)$, in which the band starts disintegrating at a predetermined point $z$. They are the elementary building blocks of all the others, a general $\rho$ being a continuous combination $\rho(x)=\int\delta(x-z)\rho(z)dz$ of deterministic measurements. We now show that, once deterministic measurements are admitted, indifferent universal measurements exist in abundance, with a transparent physical meaning, and that among them there are some whose support is the set of \emph{all} densities. We shall also see, in Section~\ref{sec:rigidity}, that deterministic measurements are not only natural but \emph{unavoidable}: there is no non-trivial indifferent universal measurement which selects only genuine densities. 

Let us consider three paradigmatic examples, and start with a \emph{random deterministic measurement}. Let $\rho(x)=\delta(x-z)$, $z\in [-1,1]$. A random deterministic measurement is a  $\U$-measurement which, at each run, selects a deterministic measurement with a uniformly distributed disintegrable point $z$. It is obviously indifferent and its average density is $\avgU{\delta(x-z)}=\frac12$, the uniform density, in agreement with Theorem~\ref{thm:uniqueness}. But this universal measurement is nothing but the original hidden-measurement model itself, read at the meta-level: ``first select at random, uniformly, a deterministic measurement, then perform it'' is operationally the same as ``perform the uniform measurement.'' Thus, \emph{the uniform measurement is the simplest universal measurement}, and this is, in a sense, the content of the original conjecture of \cite{Aerts1998} in its most elementary form.

Consider now a finite number $k$ of disintegrable points $z_1,\dots,z_k$, chosen independently and uniformly in $[-1,1]$, and let $\rho(x)=\frac1k\sum_{i=1}^k\delta(x-z_i)$. This corresponds to a band which can only disintegrate at $k$ randomly located points, with equal probabilities. Again, $\U$ is indifferent, and $\rhobar_\U=\rho_u$. As $k\to\infty$, the empirical density of the $k$ points approaches the uniform density (law of large numbers), so that the random $k$-band becomes, in the limit, the uniform band.

Finally, let us take infinitely many disintegrable points, $z_1,z_2,\dots$, independent and uniform in $[-1,1]$, with random weights $w_1,w_2,\dots$, $\sum_iw_i=1$, $w_i\ge0$, chosen independently of the positions by the following ``stick-breaking'' recipe \cite{Sethuraman}: take a stick of length $1$ and break off a random fraction $V_1$ of it: the piece broken off has length $w_1=V_1$. Then break off a random fraction $V_2$ of what remains (which has length $1-V_1$): the piece broken off has length $w_2=(1-V_1)V_2$. Continue in the same way: at the $i$-th step a fraction $V_i$ of the remaining stick, of length $\prod_{l<i}(1-V_l)$, is broken off, so that the $i$-th piece has length $w_i=V_i\prod_{l<i}(1-V_l)$. The lengths $w_i$ of the pieces are the weights: they are positive and, since the whole stick is eventually used up, they sum to $1$. The fractions $V_i$ are independent random numbers in $[0,1]$, all with the same probability density
\begin{equation}\label{eq:beta-law}
f(v)=\alpha\,(1-v)^{\alpha-1},\quad 0\le v\le1
\end{equation}
where $\alpha>0$ is a fixed parameter (in the language of statistics, the $V_i$ follow a Beta$(1,\alpha)$ law, and $\alpha$ is called the \emph{concentration} parameter). For $\alpha=1$ the fractions are uniformly distributed; for small $\alpha$ the density \eqref{eq:beta-law} favors fractions close to $1$, so that a few points receive almost all the weight; for large $\alpha$ it favors small fractions, so that the weight is spread over many points. Then, the random density
\begin{equation}\label{eq:dirichlet}
\rho(x)=\sum_{i=1}^\infty w_i\,\delta(x-z_i)
\end{equation}
describes a band with a countable set of disintegrable points of random strengths (Fig.~\ref{fig:dirichlet}). Its law is called, in probability theory and Bayesian statistics, the \emph{Dirichlet process} with concentration $\alpha$ and ``base'' probability density $\rho_u$ (in the standard notation it is written $\mathrm{DP}(\alpha\rho_u)$, the two parameters being combined in a single measure of total mass $\alpha$) \cite{Ferguson,BlackwellMacQueen,GvdV}. It is indifferent, since the positions are uniform and the weights are chosen independently of the positions. So, Theorem~\ref{thm:uniqueness} applies and its average density is $\rho_u$. This is also a well-known property of the Dirichlet process: its mean is the base probability density, here $\rho_u$, whatever the value of $\alpha$. The parameter $\alpha$ interpolates between the two previous examples: for $\alpha\to0$ the first fraction $V_1$ is close to $1$, almost all the weight goes to the first point and one recovers the random deterministic measurement considered above; for $\alpha\to\infty$ the weights become small and numerous, and the random density becomes indistinguishable, as far as outcome probabilities are concerned, from the uniform density, as in the previous limit $k\to\infty$.
\begin{figure}[t]
\centering
\begin{tikzpicture}[scale=1.0]
\draw[->] (-4.3,0) -- (-4.3,2.2) node[above] {$w_i$};
\draw[dashed,gray] (-4,0) -- (4,0);
\foreach \x/\h in {-3.6/0.12, -3.1/1.6, -2.4/0.5, -1.7/0.25, -1.2/0.1, -0.9/1.1, -0.2/0.15, 0.1/0.3, 0.6/0.7, 1.3/0.4, 1.9/0.1, 2.2/0.2, 2.6/0.9, 3.4/0.3, 3.9/0.08}{
  \draw[thick,MidnightBlue] (\x,0) -- (\x,\h);
  \fill[MidnightBlue] (\x,\h) circle (1.8pt);
  \draw[thin,gray!60] (\x,0) -- (\x,-0.55);
  \fill[MidnightBlue] (\x,-0.7) circle (1.5pt);
}
\node[MidnightBlue,above] at (-3.1,1.65) {$w_1$};
\node[MidnightBlue,above] at (-0.9,1.15) {$w_2$};
\node[MidnightBlue,above] at (2.6,0.95) {$w_3$};
\node[MidnightBlue,above] at (0.6,0.75) {$w_4$};
\draw[line width=3pt,gray] (-4,-0.7) -- (4,-0.7);
\fill (-4,-0.7) circle (2.2pt) node[below=3pt] {$\nn_2$};
\fill (4,-0.7) circle (2.2pt) node[below=3pt] {$\nn_1$};
\node[MidnightBlue,below] at (-3.1,-0.8) {$z_1$};
\node[MidnightBlue,below] at (-0.9,-0.8) {$z_2$};
\node[MidnightBlue,below] at (2.6,-0.8) {$z_3$};
\node[MidnightBlue,below] at (0.6,-0.8) {$z_4$};
\node at (-4.3,-0.7) [left] {$x=-1$};
\node at (4.3,-0.7) [right] {$x=1$};
\end{tikzpicture}
\caption{A realization of the Dirichlet universal measurement \eqref{eq:dirichlet}: the band (bottom) can only break at a countable set of randomly located points $z_i$, each with a random strength $w_i$. These weights sum to $1$ and only the four largest are labeled in the figure, the others being too small to be visible at this scale. Although each realization is a mixture of  deterministic measurements, the average over the realizations is the uniform, quantum, measurement.}
\label{fig:dirichlet}
\end{figure}

What makes the Dirichlet process remarkable for our purposes is that, in an operational sense, it leaves no density out. To explain what we mean, observe that a universal measurement is a random selection, so that the question ``which densities can it select?'' is a question about probabilities. Densities can be compared through the outcome probabilities they produce, so let us say that a density $\rho'$ \emph{imitates} a density $\rho$ to within a tolerance $\varepsilon$, for a given finite family of initial states with $\xr\in(-1,1)$, if $|\PP(x\in A|\rho')-\PP(x\in A|\rho)|<\varepsilon$ for each of the corresponding outcome intervals $A_1$. The Dirichlet universal measurement has the following property \cite[Section~3]{Ferguson}: \emph{whatever} the density $\rho$ on $[-1,1]$, genuine or with delta components, whatever the tolerance $\varepsilon>0$, and whatever the finite family of initial states, the probability that the random density \eqref{eq:dirichlet} imitates $\rho$ to within $\varepsilon$ is strictly positive.\footnote{For any finite partition of the band into cells of positive length, the probabilities assigned to these cells by the Dirichlet process follow a Dirichlet distribution with strictly positive parameters, and can therefore approximate any prescribed probability vector with non-zero probability. More generally, the Dirichlet process has full support in the topology of weak convergence \cite{Ferguson,Majumdar1992}. This does not imply uniform approximation over all intervals when the target density contains delta components.} A prescribed genuine density is never selected exactly (a sum of delta functions never coincides with a genuine density), but every density can be imitated as closely as one wishes, for any prescribed finite family of initial states, and this with a non-zero probability. In the language of probability theory, one says that the Dirichlet process has \emph{full weak support} on the set of all probability distributions on $[-1,1]$. In this precise sense, the Dirichlet universal measurement can approximate the outcome statistics of \emph{all} possible ways of selecting an interaction, none excluded, and the universal average over all of them is the quantum measurement.

This answers (Q3) in a precise operational sense: a single indifferent universal measurement can approximate the outcome probabilities of any prescribed density, at any fixed finite resolution, and Theorem~\ref{thm:uniqueness} guarantees that its average is the Born rule, as it is for every other indifferent universal measurement.

We conclude this section by noting that indifferent universal measurements can be combined in two ways. First, if $\U$ and $\U'$ are indifferent, so is the universal measurement which uses $\U$ with probability $q$ and $\U'$ with probability $1-q$: the indifferent universal measurements thus form a \emph{convex family}, which is very large. Second, one can mix the selected densities themselves: if $\rho$, $\rho'$ and a random number $\tau\in[0,1]$ are selected independently, $\rho$ according to $\U$ and $\rho'$ according to $\U'$, then the universal measurement which selects $\rho''=\tau\rho+(1-\tau)\rho'$ is again indifferent (a rearrangement acts on $\rho$ and $\rho'$ separately, and leaves their laws unchanged). Theorem~\ref{thm:uniqueness} says that the average density is \emph{constant} on this whole family, equal to $\rho_u$: the quantum measurement is, so to speak, the center of mass of every member of the family. It is worth emphasizing that indifference is a sufficient, but not a necessary, condition for obtaining the uniform average density. Non-indifferent universal measurements can also reproduce the Born rule, as we shall illustrate in Section~\ref{sec:length}. What indifference provides is a structural guarantee: the uniform average follows from symmetry alone, independently of the particular selection law.

\section{Rigidity of the universal measurements}\label{sec:rigidity}

We have seen that an indifferent universal measurement reproduces the quantum probabilities \emph{on average}. We now show something stronger: the fluctuations of an indifferent universal measurement around its quantum average are entirely due to its deterministic (delta) components. If there are none, there are no fluctuations at all and the universal measurement \emph{is} the quantum measurement, with certainty.

Consider an indifferent universal measurement $\U$, and the cell masses $m_j(\rho)$ of the random density $\rho$, at a dyadic level $n=2^k$. By Theorem~\ref{thm:uniqueness}, their mean values are $\avgU{m_j}=1/n$. How much do they fluctuate? The answer is given by the following theorem, which involves the \emph{average coincidence probability}
\begin{equation}\label{eq:beta}
\beta\equiv\Big\langle\sum_i w_i^2\Big\rangle_{\!\U}\in[0,1]
\end{equation}
where the $w_i$ are the random weights of the delta components of the selected density $\rho$ (their number and their positions being random as well). More precisely, if $\rho=\rho_c+\sum_iw_i\delta(x-z_i)$, with $\rho_c$ the non-deterministic (non-atomic) component of $\rho$ (possibly including a singular-continuous part), $\sum_iw_i^2$ is the probability that two disintegrable points selected independently according to $\rho$ coincide exactly, since the non-deterministic component $\rho_c$ never produces coincidences. It vanishes if and only if $\rho$ has no deterministic components. The number $\beta$ is the average of this quantity over the random selection of $\rho$, i.e., the probability that two interactions coincide when $\rho$ is first selected according to $\U$ and then kept fixed for both independent selections of the interaction. Hence, it is the \emph{coincidence probability} of the universal measurement $\U$.

Note that $\beta$ is not the total weight $\langle\sum_iw_i\rangle_\U$ of the deterministic components, i.e., the probability that the selected interaction comes from a delta component. Since $\sum_iw_i^2\le\sum_iw_i$, with equality only when all the weight is carried by a single point, $\beta$ weighs the deterministic components quadratically, and is therefore sensitive to their \emph{concentration} rather than to their amount. What matters for us is that $\beta$ vanishes if and only if the universal measurement has no deterministic (delta) components at all, and that, as we now show, it is precisely $\beta$ which governs the fluctuations of the outcome probabilities around their quantum values.

\begin{theorem}[variance of an indifferent universal measurement]\label{thm:variance}
Let $\U$ be an indifferent universal measurement. Then, for every dyadic level $n$ and every cell $j$,
\begin{equation}\label{eq:var}
\Var_\U\big[m_j(\rho)\big]=\beta \Big(\frac{n-1}{n^2}\Big),\quad \Cov_\U\big[m_j(\rho),m_l(\rho)\big]=-\frac{\beta}{n^2}\quad(j\neq l)
\end{equation}
and more generally, for every $x\in[-1,1]$, writing $p_x\equiv F_u(x)=\frac{1+x}{2}$
\begin{equation}\label{eq:varF}
\Var_\U\big[F_\rho(x)\big]=\beta\,p_x\,(1-p_x)
\end{equation}
\end{theorem}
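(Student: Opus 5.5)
The plan is to reduce everything to a single exchangeable quantity at each dyadic level and then fix it by a refinement recursion plus a limit. For $n=2^k$ I set
\[
c_n\equiv\Big\langle\sum_{j=1}^n m_j(\rho)^2\Big\rangle_{\!\U}.
\]
By indifference, the joint law of $(m_1,\dots,m_n)$ is exchangeable. The transposition argument in the proof of Theorem~\ref{thm:uniqueness} therefore gives two facts: all $\avgU{m_j^2}$ are equal, hence equal to $c_n/n$, and all $\avgU{m_jm_l}$ with $j\neq l$ are equal. Since $\sum_j m_j=1$, expanding $\avgU{(\sum_j m_j)^2}=1$ gives
\[
\avgU{m_jm_l}=\frac{1-c_n}{n(n-1)}\quad(j\neq l).
\]
Subtracting $\avgU{m_j}^2=1/n^2$, which Theorem~\ref{thm:uniqueness} provides, the whole statement \eqref{eq:var} reduces to proving
\[
c_n=\beta+\frac{1-\beta}{n}\quad\text{for every dyadic } n.
\]
Indeed, with this value, $\Var_\U[m_j]=c_n/n-1/n^2=\beta(n-1)/n^2$ and $\Cov_\U[m_j,m_l]=(1-\beta)/n^2-1/n^2=-\beta/n^2$.

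To obtain this formula I will combine two facts.

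\emph{Nesting recursion.} A level-$n$ cell is the union of two level-$2n$ cells with masses $a$ and $b$, so $m_j^2=a^2+b^2+2ab$. Taking expectations, using the level-$2n$ exchangeability formulas and dividing through, gives
\[
c_n=\frac{2n-2}{2n-1}\,c_{2n}+\frac{1}{2n-1}.
\]
I then reparametrize by $e_n\equiv(nc_n-1)/(n-1)$, so that $c_n=e_n+(1-e_n)/n$. A direct substitution should show that the recursion is exactly $e_n=e_{2n}$. Hence $e_n$ is a constant $e$ for all dyadic $n\ge2$.

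\emph{Limit.} I would identify $e$ by letting $n\to\infty$, which I expect to be the main obstacle. Write $D_n=\bigcup_j\sigma_j^{(n)}\times\sigma_j^{(n)}$. Then $\sum_j m_j(\rho)^2=(\rho\otimes\rho)(D_n)$. Because the dyadic cells are nested and half-open (the last one closed), the sets $D_n$ decrease, and their intersection is exactly the diagonal $\{(x,x)\}$. Continuity from above of the finite measure $\rho\otimes\rho$ then gives
\[
\sum_j m_j^2\to(\rho\otimes\rho)(\text{diagonal})=\sum_i w_i^2
\]
for each fixed $\rho$. The non-atomic part $\rho_c$ gives no diagonal mass, by Fubini, since $\rho_c$ charges no single point. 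All these quantities are bounded by $1$, so dominated convergence yields $c_n\to\beta$. Since $c_n=e+(1-e)/n\to e$, this forces $e=\beta$. One should also check measurability of $\rho\mapsto\sum_iw_i^2$, but it is a pointwise limit of measurable functions, so this is automatic.

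For \eqref{eq:varF}, first take a dyadic point $x$ that is the right end of cell $j$ at level $n$. Then $F_\rho(x)=\sum_{l\le j}m_l$, and by \eqref{eq:var}
\[
\Var_\U[F_\rho(x)]=j\,\beta\frac{n-1}{n^2}-j(j-1)\frac{\beta}{n^2}=\beta\,\frac{j(n-j)}{n^2}=\beta\,p_x(1-p_x).
\]
For arbitrary $x$, I take dyadic points $x_k\downarrow x$. Right-continuity of $F_\rho$ (cumulative function of a probability measure) gives $F_\rho(x_k)\to F_\rho(x)$ pointwise for every $\rho$. Since $0\le F_\rho\le1$, bounded convergence passes the first and second moments to the limit, and $p_{x_k}\to p_x$ by continuity of $F_u$. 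The case $x=1$ is trivial, because there both sides vanish.
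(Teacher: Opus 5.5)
Your proof is correct and is essentially the paper's argument, written in second moments instead of variances and covariances: since $nc_n-1=n^2\Var_\U[m_j]$, your $e_n$ is exactly the paper's $\beta_n=n^2v_n/(n-1)$, your recursion $e_n=e_{2n}$ is its $\beta_n=\beta_{2n}$, and the identification of the constant through $\avgU{\sum_j m_j^2}\to\avgU{\sum_i w_i^2}$ and the dyadic computation of $\Var_\U[F_\rho(x)]$ are the same steps. Your use of the decreasing sets $D_n$ with continuity from above of $\rho\otimes\rho$, and of right-continuity plus bounded convergence for general $x$, simply makes rigorous what the paper states briefly. One small point, shared with the paper: with half-open cells, $\sum_{l\le j}m_l=\rho([-1,x))$ equals $F_\rho(x)$ only almost surely, which holds because $\avgU{\rho(\{x\})}=\rhobar_\U(\{x\})=0$ by Theorem~\ref{thm:uniqueness}.
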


\begin{proof}
Let $v_n$ be the variance of $m_j(\rho)$ and $c_n$ the covariance of $m_j(\rho)$ and $m_l(\rho)$, $j\neq l$. By indifference, these numbers do not depend on $j$ and $l$ (any two cells, or pairs of cells, can be exchanged by a rearrangement, and variances and covariances only depend on the probabilities with which the masses take their values, as in the proof of Theorem~\ref{thm:uniqueness}). All mean values, variances and covariances in this proof are taken with respect to $\U$, the selected density $\rho$ being the only random quantity. In addition, the variance of a sum of random variables is the sum of their variances plus the sum of the covariances of all the pairs of distinct variables:
\begin{equation}
\Var_\U\Big(\sum_{j=1}^nm_j\Big)=\sum_{j=1}^n\Var_\U(m_j)+\sum_{j\neq l}\Cov_\U(m_j,m_l)=nv_n+n(n-1)c_n
\label{varianceformula}
\end{equation}
where for the second equality we have used that there are $n$ variances, all equal to $v_n$, and $n(n-1)$ ordered pairs $(j,l)$ with $j\neq l$, all with covariance $c_n$. But $\sum_jm_j(\rho)=1$, for every $\rho$, so that this sum has zero variance, and therefore
\begin{equation}\label{eq:c}
c_n=-\frac{v_n}{n-1}
\end{equation}
In other words, since the cell masses always sum to $1$, a cell which is heavier than average forces the others to be lighter, which is why the covariance between two cells is negative, and exactly compensates the variances. 

At this point, we can use the fact that each cell of level $k$ is the union of two cells of level $k+1$, so $m_j$ (level $n$) is the sum of two cell masses of level $2n$. So, using again (\ref{varianceformula}), together with (\ref{eq:c}) applied at level $2n$, i.e., $c_{2n}=-v_{2n}/(2n-1)$, we have 
\begin{equation}
v_n=2\, v_{2n}+2\, c_{2n}=2\, v_{2n}\Big(1-\frac{1}{2n-1}\Big)=2\, v_{2n}\,\frac{2n-2}{2n-1}
\label{relation-v}
\end{equation}
Defining $\beta_n\equiv n^2v_n/(n-1)$, we can write $v_n=\beta_n(\frac{n-1}{n^2})$ and (\ref{relation-v}) gives 
\begin{equation}
\beta_n=\frac{n^2}{n-1}2v_{2n}\frac{2(n-1)}{2n-1}=\frac{(2n)^2v_{2n}}{2n-1}=\beta_{2n}
\end{equation}
So, we have shown that the number $\beta_n$ is the same at all dyadic levels. We can call it $\beta$ and observe that \eqref{eq:var} holds, with this $\beta$, by \eqref{eq:c}.

As a final step, consider the mean value, with respect to $\U$, of $\sum_jm_j(\rho)^2$. By definition of the variance, $\avgU{m_j(\rho)^2}=\Var_\U(m_j)+\avgU{m_j(\rho)}^2=v_n+\frac{1}{n^2}$, so that
\begin{equation}\label{eq:sumsq}
\Big\langle\sum_{j=1}^nm_j(\rho)^2\Big\rangle_{\!\U}=n\Big(v_n+\frac{1}{n^2}\Big)=\beta\Big(\frac{n-1}{n}\Big)+\frac1n=\beta+\frac{1-\beta}{n}
\end{equation}
On the other hand, for a given $\rho$, $\sum_jm_j(\rho)^2$ is the probability that two disintegrable points selected independently according to $\rho$ fall in the same cell of level $k$. As $k$ increases, the cells shrink and this probability decreases to the probability that the two points \emph{coincide}, which is $\sum_iw_i^2$. Taking the mean value over $\U$ (the quantities are bounded by $1$, so mean and limit can be exchanged) and letting $n\to\infty$ in \eqref{eq:sumsq}, we get $\beta=\avgU{\sum_iw_i^2}$, which is \eqref{eq:beta}. Finally, \eqref{eq:varF} follows from \eqref{eq:var} at dyadic points $x$, since $F_\rho(x)$ is then a sum of $q_x=p_xn$ cell masses, whose variance is $q_x v_n+q_x(q_x-1)c_n=\beta\frac{q_x}{n}(1-\frac{q_x}{n})$, and by continuity we can extend to arbitrary $x$.
\end{proof}

\noindent To see what \eqref{eq:varF} means, consider the following simple universal measurement $\U_\beta$: with probability $\beta$ it selects the random deterministic measurement $\rho(x)=\delta(x-z)$, $z$ uniform in $[-1,1]$, and with probability $1-\beta$ it selects the uniform density $\rho_u$. It is indifferent, being a mixture of two indifferent universal measurements.  For an initial state such that the particle sits on the band at the coordinate $x$, the outcome probability $F_\rho(x)$ is: in the first case, $1$ if $z\le x$ and $0$ otherwise, i.e., a yes--no variable with mean $p_x$ and variance $p_x(1-p_x)$; in the second case, $p_x$, with variance $0$. In both cases the mean is $p_x$, so the mean of $F_\rho(x)$ is $p_x$, as it must be by Theorem~\ref{thm:uniqueness}, and the variance is the weighted sum of the two variances, $\beta\,p_x(1-p_x)+(1-\beta)\cdot0=\beta\,p_x(1-p_x)$. Theorem~\ref{thm:variance} says that \emph{every} indifferent universal measurement with coincidence probability $\beta$ has exactly this variance, whatever its actual structure: as far as the fluctuations of the outcome probabilities around their quantum values are concerned, an indifferent universal measurement cannot be distinguished from a random measurement $\U_\beta$ which is deterministic with probability $\beta$ and quantum with probability $1-\beta$.\footnote{More generally, the same parameter $\beta$ determines all second-order correlations of the outcome probabilities: $\Cov_\U[F_\rho(x),F_\rho(y)]= \beta(p_{\min\{x,y\}}-p_xp_y)$. This follows from the covariances of the dyadic cell masses in \eqref{eq:var}, by approximation. Thus, all indifferent universal measurements with the same $\beta$ have identical second-order statistics, although their higher-order statistics may differ.}

For the uniform measurement, $\beta=0$. For the random deterministic measurement of Section~\ref{sec:pure}, the selected density has a single delta component of weight $1$, so $\beta=1$: the selected measurement is always deterministic, and $F_\rho(x)$ takes only the values $0$ and $1$, with mean $p_x$ and variance $p_x(1-p_x)$. For $k$ equally likely disintegrable points, each has weight $1/k$, so $\sum_iw_i^2=k /k^2=1/k$, and $\beta=1/k$. For the Dirichlet process with concentration parameter $\alpha$, the stick-breaking construction gives $\beta$ by a self-similarity argument: writing $S=\sum_{i\ge1}w_i^2$, the first piece contributes $w_1^2=V_1^2$, and the remaining pieces are obtained by breaking the remaining stick, of length $1-V_1$, in exactly the same way as the whole stick, so that $S=V_1^2+(1-V_1)^2S'$, where $S'$ has the same law as $S$ and is independent of $V_1$. Taking mean values, $\avg{S}=\avg{V_1^2}+\avg{(1-V_1)^2}\avg{S}$. For the density \eqref{eq:beta-law} one computes $\avg{V_1^2}=\frac{2}{(1+\alpha)(2+\alpha)}$ and $\avg{(1-V_1)^2}=\frac{\alpha}{2+\alpha}$, hence $\avg{S}\big(1-\frac{\alpha}{2+\alpha}\big)=\frac{2}{(1+\alpha)(2+\alpha)}$, i.e.,
\begin{equation}
\beta=\avg{S}=\frac{1}{1+\alpha}
\end{equation}
As expected, $\beta\to1$ for $\alpha\to0$ (random deterministic measurement) and $\beta\to0$ for $\alpha\to\infty$ (uniform measurement).

\begin{theorem}[uniform or deterministic]\label{thm:dichotomy}
Let $\U$ be an indifferent universal measurement. Then $\U$ is a mixture of the uniform measurement and of an indifferent universal measurement which only selects densities with deterministic components: with probability $q$ the selected density is $\rho_u$, and with probability $1-q$ it has at least one delta component.
\end{theorem}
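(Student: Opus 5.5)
The plan is to split $\U$ according to the event $D$ that the selected density has at least one delta component. Set $q\equiv\U(D^c)$. If $0<q<1$, write $\U=q\,\U_0+(1-q)\,\U_1$, where $\U_0$ and $\U_1$ are the conditional laws on $D^c$ and on $D$. If $q=0$ or $q=1$, only one term is present.

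The first step is to check that $D$ is invariant under every rearrangement $T_\pi$. This holds because a rearrangement is a piecewise translation: it sends delta components to delta components, with the same weights, and sends non-atomic parts to non-atomic parts, as noted after \eqref{eq:rearranged}. Hence $T_\pi^{-1}D=D$. Since $\U$ is invariant, both conditional laws are invariant too: for instance $\U_1(T_\pi^{-1}E)=\U(T_\pi^{-1}E\cap D)/\U(D)=\U(T_\pi^{-1}(E\cap D))/\U(D)=\U_1(E)$. So $\U_0$ and $\U_1$ are indifferent universal measurements. Measurability of $D$ should follow from writing $\sum_iw_i^2$ as the monotone limit of $\sum_jm_j(\rho)^2$, exactly as in the proof of Theorem~\ref{thm:variance}, so that $D=\{\lim_k\sum_jm_j^2>0\}$.

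The second step, which carries the real content, is to show that $\U_0$ selects $\rho_u$ with probability one. Since $\U_0$ is indifferent and selects only densities without delta components, we have $\sum_iw_i^2=0$ almost surely, so its coincidence probability \eqref{eq:beta} is $\beta=0$. Theorem~\ref{thm:variance}, equation \eqref{eq:varF}, then gives $\Var_{\U_0}[F_\rho(x)]=0$ for every $x$. By Theorem~\ref{thm:uniqueness}, the mean of $F_\rho(x)$ is $F_u(x)$. Hence, for each fixed $x$, $F_\rho(x)=F_u(x)$ almost surely.

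I then intersect over the countably many dyadic points. This gives a single full-measure event on which $F_\rho=F_u$ at every dyadic point. The sandwich argument of Theorem~\ref{thm:uniqueness}, using monotonicity of $F_\rho$ and continuity of $F_u$, extends the equality to all $x$. Therefore $\rho=\rho_u$ as a measure, $\U_0$-almost surely.

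The main obstacle I expect is bookkeeping rather than analysis. One must justify that conditioning preserves indifference, which requires the invariance of $D$ under all $T_\pi$ at all dyadic levels. One must also handle the degenerate cases $q\in\{0,1\}$, where one of the conditional laws is undefined and the statement reduces to a single component. Once these are settled, the dichotomy follows directly from $\beta=0\Rightarrow$ zero variance in Theorem~\ref{thm:variance}.
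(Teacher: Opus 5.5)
Your proposal is correct and is essentially the paper's proof. Both condition $\U$ on the rearrangement-invariant event of having (or not having) delta components, observe that the conditional laws remain indifferent, use $\beta=0$ in Theorem~\ref{thm:variance} to force $F_\rho=F_u$ at all dyadic points almost surely, and conclude with the monotonicity sandwich of Theorem~\ref{thm:uniqueness}. The differences are minor: you use \eqref{eq:varF} at dyadic points where the paper uses the equivalent cell-mass variances \eqref{eq:var}; the paper proves Corollary~\ref{cor:genuine} first; and you add bookkeeping the paper leaves implicit, namely the measurability of $D$, the explicit invariance of the conditional laws, and the degenerate cases $q\in\{0,1\}$.
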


\begin{corollary}\label{cor:genuine}
In particular ($q=1$): if $\U$ never selects a density with delta components, then it selects, with probability one, the uniform density $\rho_u$.
\end{corollary}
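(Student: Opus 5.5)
The plan is to split $\U$ according to whether the selected density has delta components, and to apply Theorem~\ref{thm:variance} to the atom-free part. Let $E$ be the set of probability measures on $[-1,1]$ with no delta components. I will first check that $E$ is measurable. For every $\rho$, the coincidence probability $\sum_i w_i^2$ is the decreasing limit, as $k\to\infty$, of $\sum_{j=1}^{2^k} m_j(\rho)^2$. This is the same limit used in the proof of Theorem~\ref{thm:variance}. Each cell mass $m_j$ is a measurable function of $\rho$, so $\rho\mapsto\sum_i w_i^2$ is measurable. Hence $E=\{\rho:\sum_iw_i^2=0\}$ is a measurable set. I then set $q\equiv\U(E)$.

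The second step is to show that both $E$ and its complement are invariant under every rearrangement $T_\pi$. As observed after \eqref{eq:rearranged}, a rearrangement is a piecewise translation of the band. It therefore maps delta components to delta components with the same weights, so $T_\pi^{-1}E=E$. It follows that if $q>0$, the conditional law $\U_E(\cdot)\equiv\U(\cdot\cap E)/q$ again satisfies Definition~\ref{def:indifferent}. The reason is that
\[
\U_E(T_\pi^{-1}F)=\U(T_\pi^{-1}F\cap T_\pi^{-1}E)/q=\U(T_\pi^{-1}(F\cap E))/q=\U_E(F).
\]
The same holds for the conditional law $\U_{E^c}$ when $q<1$. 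This gives the decomposition $\U=q\,\U_E+(1-q)\,\U_{E^c}$, with $\U_{E^c}$ indifferent and, by construction, selecting only densities with at least one delta component. The edge cases $q=0$ and $q=1$ are immediate, since one of the two terms is simply absent.

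The remaining, and main, step is to show that $\U_E$ is concentrated on $\rho_u$. By construction, the coincidence probability of $\U_E$ is $\beta=0$. Theorem~\ref{thm:variance} then gives $\Var_{\U_E}[m_j(\rho)]=0$ for every dyadic level and every cell. Theorem~\ref{thm:uniqueness} gives the mean $1/n$, so $m_j(\rho)=1/n$ $\U_E$-almost surely. There are only countably many dyadic cells, so one null set can be discarded for all of them at once. Outside that null set, $F_\rho$ coincides with $F_u$ at every dyadic point. The monotonicity and squeezing argument from the end of the proof of Theorem~\ref{thm:uniqueness} then gives $F_\rho=F_u$ everywhere, and therefore $\rho=\rho_u$.

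I expect the delicate points to be the measurability of $E$ and the countable-union step, rather than any computation. Once these are secured, the identification of the atom-free part with $\rho_u$ is immediate. Since $\rho_u\in E$, $q$ is exactly the probability that the selected density is $\rho_u$. Corollary~\ref{cor:genuine} is then the special case $q=1$.
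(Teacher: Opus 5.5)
Your proof is correct and follows the paper's argument. Since $\beta=0$, all the dyadic cell-mass variances of Theorem~\ref{thm:variance} vanish, so $m_j(\rho)=1/n$ almost surely; a countable union makes this hold at all levels at once, and the monotonicity argument of Theorem~\ref{thm:uniqueness} then gives $\rho=\rho_u$, while conditioning on the rearrangement-invariant set of atomless measures yields Theorem~\ref{thm:dichotomy}.

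The only differences are in ordering and detail. You condition first and obtain the corollary as the case $q=1$, whereas the paper proves the corollary first and then conditions. You also make explicit two points the paper leaves implicit: that $E$ is measurable, and that the conditioned law $\U_E$ is invariant.
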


\begin{proof}
We first prove the corollary, then the theorem. If $\rho$ almost surely has no delta components, then $\sum_iw_i^2=0$ almost surely, so $\beta=0$ by \eqref{eq:beta}, so all the variances \eqref{eq:var} vanish: every cell mass $m_j(\rho)$, of every dyadic level, is almost surely equal to its mean value $1/n$. There are countably many dyadic cells, so with probability one \emph{all} of them have mass $1/n$ simultaneously, and by the monotonicity argument of the proof of Theorem~\ref{thm:uniqueness}, $F_\rho=F_u$, i.e., $\rho=\rho_u$. Now, the property ``$\rho$ has no delta components'' is unaffected by rearrangements. Hence, if we condition the law $\U$ on this property, we obtain again an indifferent universal measurement, which now selects no delta components, and which by the corollary is the uniform measurement. Conditioning on the complementary property, we obtain an indifferent universal measurement selecting densities with delta components only.
\end{proof}

\noindent Corollary~\ref{cor:genuine} is a \emph{rigidity} statement: indifference, together with the absence of deterministic components, leaves no freedom at all. It is the strongest possible form of the intuition that guided the analysis in \cite{AS2014a}, that ``averaging over all densities cannot give anything but the uniform density'': if the random selection does not privilege any portion of the band and selects genuine densities, then it does not merely give the uniform density \emph{on average}, it gives the uniform density \emph{every time}. In the hidden-measurement language: a universal measurement without deterministic components is, with certainty, the quantum measurement. The Born rule then governs the outcome probabilities of each individual run, and not merely the statistics obtained by averaging over many runs.

The theorem also explains, retrospectively, a historical difficulty. The attempts made in the nineties to define a universal average by means of a probability law on the set of all densities (see the Preface of \cite{ASbook}) could not succeed, not because of technical subtleties, but because \emph{no such law exists}, except the trivial one concentrated on $\rho_u$. Non-trivial indifferent universal measurements exist only if one admits deterministic measurements, as in Section~\ref{sec:pure}. In probability theory this is a known structural fact: a random probability measure on $[-1,1]$ whose law is invariant under the length-preserving transformations of the interval is almost surely of the form $\tau\rho_u+\sum_iw_i\delta(x-z_i)$, with the $z_i$ independent and uniform \cite[Theorem~1.25, p.~50]{KallenbergSym}; see also \cite{Kallenberg1973}. 
Theorem~\ref{thm:dichotomy} is the elementary part of this representation, which also shows that the continuous part of the density selected by an indifferent universal measurement is always of the form $\tau\rho_u$, i.e., the uniform density multiplied by its total weight $\tau\in[0,1]$, the remaining weight $1-\tau$ being carried by the delta components: no non-uniform profile can appear, not even alongside deterministic components.

What happens to the cellular universal measurements $\U_n$ introduced in  \cite{AS2014a}? They are indifferent only at their own level, and they select two-valued densities, which are genuine densities. So, Theorem~\ref{thm:dichotomy} does not apply to them directly, and indeed a two-valued density is very different from the uniform one (half of the band, on average, is non-disintegrable). Yet, as $n\to\infty$, the \emph{outcome probabilities} of the randomly selected two-valued density become indistinguishable from the quantum ones, with probability tending to one. This is a law of large numbers: the fraction of disintegrable cells in any region $A$ of the band is close to the fraction of disintegrable cells in the whole band, with fluctuations of order $1/\sqrt n$; more precisely (Appendix~\ref{app:concentration})
\begin{equation}\label{eq:concentration}
\Var_{\U_n}\big[\PP(x\in A|\rho_n)\big]\le\frac{2}{n+1}
\end{equation}
for every region $A$ made of whole cells. So the $\U_n$-measurements also \emph{concentrate} on the quantum measurement, not in the sense that the selected density becomes uniform (it does not: in each cell it is either zero or twice the uniform value), but in the sense that its transition probabilities become those of the uniform density. This is the way in which Theorem~\ref{thm:dichotomy} manifests itself at finite resolution, and it clarifies in what sense the two-valued cellular densities ``approach'' the uniform density as $n\to\infty$. They do not approach it as functions,\footnote{A two-valued density of level $n$ takes the value $0$ in about half of the cells and about twice the uniform value in the other half, so that, as $n$ grows, its profile becomes an ever finer zigzag between $0$ and $1$, which converges to no function at all and remains at a distance $\int_{-1}^1|\rho_n-\rho_u|dx\approx1$ from the uniform density.} but their integrals over any fixed interval converge to that of $\rho_u$, because the oscillations compensate each other on scales smaller than the interval. In other words, it is the \emph{probabilities}, not the densities, that become quantum. This is also, in our view, the correct way to read the remark, made in \cite{Shackel2024}, that the limit of the cellular densities would be a set of Dirac delta functions: they do not, they only become infinitely fast oscillations whose local average is uniform.

\section{Sequences of measurements}\label{sec:sequences}

The average coincidence probability $\beta$ in (\ref{eq:beta}) has an additional physical meaning when one considers \emph{sequences} of measurements. Suppose that the same measurement is performed on a collection of entities, all prepared in the same state $\rr$, and let us distinguish two scenarios.

In the first scenario, the random density $\rho$ is selected afresh at each run. Then the outcomes are independent, each with the quantum probabilities \eqref{eq:thm1-born}, whatever the indifferent universal measurement $\U$ and whatever the value of its coincidence probability $\beta$: the sequence is exactly the one predicted by quantum mechanics for independent repetitions.

In the second scenario, the density $\rho$ is selected once and then \emph{persists} for the whole sequence (the measuring apparatus is ``the same'' at each run, in the sense that its `way' of selecting interactions has been fixed once and for all, although it remains unknown to us). Then the outcomes are no longer independent. The probability of obtaining, say, the outcome $a_1$ in two consecutive runs is the mean value (we write for simplicity $x\equiv \xr$) of $F_\rho(x)^2$, and by Theorem~\ref{thm:variance}
\begin{equation}\label{eq:sequence}
\PP(a_1\text{ and then }a_1)=\avgU{F_\rho(x)^2}=p_x^2+\beta\,p_x(1-p_x),\quad p_x=\frac{1+x}{2},
\end{equation}
to be compared with the value $p_x^2$ predicted by quantum mechanics for independent outcomes. The outcomes are positively correlated: having obtained $a_1$ makes $a_1$ more likely at the next run, by an amount proportional to $\beta$. The sequence is \emph{exchangeable} (its law is invariant under permutations of the runs) but, for $\beta >0$ and $0<p_x<1$, not independent. For the Dirichlet universal measurement this is the well-known P\'olya-urn structure of \cite{BlackwellMacQueen}, in which each new breaking point is either a previously used one, with a probability proportional to the number of times it has been used, or a fresh uniformly distributed one. At the level of the binary outcomes, this induces a corresponding two-color P\'olya-urn process. 

So, the hypothesis that quantum measurements are universal measurements is compatible with the observed independence of repeated quantum measurements in two ways: either the way of selecting the interaction is itself re-selected at each run, or it is persistent but has no deterministic components, in which case, by Theorem~\ref{thm:dichotomy}, it is the uniform measurement. The two cases are in principle distinguishable through the statistics of sequences, with $\beta$ as the parameter to be determined experimentally -- or, more precisely, to be bounded from above, since  the independence of repeated measurements on identically prepared systems, which is routinely assumed and, as far as we know, never contradicted in quantum laboratories, indicates that, if the way of selecting the interaction persists from one run to the next, $\beta$ must be very small. A dedicated analysis of existing sequential data would be needed to turn this into a quantitative bound.

These two scenarios are of course idealizations, and the physically relevant situation is likely to be intermediate: the way of selecting the interaction, i.e., the density $\rho$, could persist for a certain characteristic time $\tau$, and then be ``reset,'' i.e., selected anew. In that case, whether correlations are observed depends on the latency $\Delta t$ between successive runs. To make this quantitative in the simplest way, suppose that the resets occur at random times, with a mean interval $\textsc{t}$ (a Poisson process). Two runs separated by a time $\Delta t$ then share the same $\rho$ with probability $e^{-\Delta t/\textsc{t}}$, in which case their outcomes are correlated as in \eqref{eq:sequence}, and are otherwise independent, so that, for an initial state with Born probability $p$,
\begin{equation}\label{eq:sequence-tau}
\PP(a_1\text{ and then }a_1)=p^2+\beta\,p(1-p)\,e^{-\Delta t/\textsc{t}}
\end{equation}
The correlations are visible only for latencies shorter than the persistence time $\textsc{t}$, and the independence of repeated quantum measurements, as observed in the laboratory, constrains the product $\beta\, e^{-\Delta t/\textsc{t}}$, not $\beta$ alone. This situation is entirely analogous to the one we discussed in Section~10 of \cite{AS2014a}, one level below, for the measurement interactions themselves: if the interaction $\boldsymbol\lambda$, rather than the way of selecting it, persisted in the measuring device for a characteristic time, successive measurements performed within that time on identically prepared entities would tend to produce the same outcome. This is precisely the hypothesis that was tested by Durt et al.\ \cite{Durt2002}, with an atomic interferometer, by analyzing the temporal statistics of successive measurements, with a negative result: no temporal correlations were observed. 

As we observed in \cite{AS2014a}, this negative result admits different readings: the fluctuations could be too rapid to be detected, or, more fundamentally, the hidden interaction could be created anew at each measurement, in the interplay between the measuring system and the measured entity, in the manner of a symmetry breaking, rather than being a pre-existing fluctuating state of the apparatus. In the extended Bloch representation this is suggested by the fact that the membrane is destroyed by each measurement and has to be re-created for the next one. The same two readings are available one level up, for the density $\rho$: either it is a persistent (if unknown) characteristic of the measuring system, in which case the correlations \eqref{eq:sequence-tau} should appear for sufficiently short latencies -- and the negative result of Durt et al., or a refined version of their experiment, also constrains this scenario, although translating it into a quantitative bound on $\beta e^{-\Delta t/\textsc{t}}$ would require a dedicated analysis -- or the way of selecting the interaction is itself created at each contact between the entity and the measuring system, in which case no correlation is to be expected.\footnote{Interestingly, temporal correlations do not require the same hidden-measurement interaction to persist across successive runs. Even when a new interaction is independently selected at each run, correlations can arise if the underlying density $\rho$ remains unchanged: the outcomes are independent conditional on $\rho$, but generally correlated when averaged over its random selection according to $\U$. Thus, memory effects can emerge at the level of the selection law itself, rather than at the level of the individual interactions.} In the latter reading, a universal measurement is, at each run, a genuinely new ``measurement of measurements,'' and the Born rule for sequences holds exactly, with no constraint on the deterministic content of the individual runs. 

We leave it to future work, and to the experimentalists, to assess which of these readings is the correct one. The point we want to make here is that the notion of universal measurement, far from being a purely formal device, articulates a precise and in principle testable hypothesis about the temporal structure of the measurement process.

\section{The role of the length}\label{sec:length}

We now come to the question (Q4): the role of the length. Indifference was defined through cells of equal \emph{length}, and the uniform density is uniform with respect to the length along the band. Could one dispense with this notion, and define a universal average which would depend only on the set of all densities, with no reference to how the band is measured? The answer is no, and the reason is instructive.

Consider rearrangements of a more general kind: cut the band into pieces of \emph{unequal} lengths, and glue them again in a different order, after stretching or compressing each piece so that it fits into its new place. Such an operation maps every density to a density (a stretched piece carries the same mass over a longer interval), so that the set of all densities is invariant under it, exactly as it is under the length-preserving rearrangements. If the universal average could be defined from the set of all densities alone, with no reference to the length, it would have to be invariant under these generalized rearrangements too.

\begin{theorem}[no universal average without a notion of size]\label{thm:length}
There is no probability density on $[-1,1]$ which is invariant under all the generalized rearrangements (cut into pieces of unequal lengths, stretch or compress, and glue again). Consequently, no universal measurement can be indifferent with respect to all of them.
\end{theorem}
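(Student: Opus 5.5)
The plan is to argue by contradiction, using only the simplest generalized rearrangement: a two-piece swap with unequal lengths. Fix $a\in(-1,1)$ with $a\neq0$ and cut the band at $a$ into $L=[-1,a]$, of length $1+a$, and $R=[a,1]$, of length $1-a$. The generalized rearrangement $g_a$ puts the piece $R$, linearly stretched or compressed, into the interval $[-1,-a]$, and the piece $L$, linearly rescaled, into $[-a,1]$. Thus the piece lengths are exchanged. The induced map $\rho\mapsto\rho^{g_a}$ preserves the masses of the pieces and transports them, so $\rho^{g_a}([-1,-a])=\rho(R)$ and $\rho^{g_a}([-a,1])=\rho(L)$.

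Now suppose $\rho$ is a probability density (or measure) invariant under all such maps. Invariance under $g_a$ gives $\rho([-1,-a])=\rho([a,1])$ for every $a$, i.e., $F_\rho(-a)=1-F_\rho(a)$. This alone is only a symmetry condition, so I would add a second move, the three-piece version. Cut at points $-1<a<b<1$ and send the middle piece $[a,b]$ to an arbitrary interval of prescribed length, say $[-1,-1+\ell]$ with $\ell$ free in $(0,2)$, distributing the outer two pieces over the remaining room. Invariance then yields $F_\rho(-1+\ell)=\rho([a,b])$ for every $\ell\in(0,2)$ and every $a<b$.

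This is the step that forces the contradiction. The left side depends on $\ell$ while the right side does not, so $F_\rho$ must be constant on $(-1,1)$. Then $\rho$ is concentrated on the endpoints, and $\rho([a,b])=0$ for interior intervals. Hence $F_\rho(-1+\ell)=0$ for all $\ell\in(0,2)$, so all the mass sits at $x=1$. But applying a reflection-type rearrangement, for instance $g_a$ with the pieces reversed so that the end at $1$ moves into the interior (or simply using $F_\rho(-a)=1-F_\rho(a)$, which gives $0=1$), rules this out as well. So no invariant probability measure exists. If delta components at the endpoints need separate care, I would note that gluing can make an endpoint of a piece land at an interior point, which moves the atom off the boundary.

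For the consequence, I would argue as in Theorem~\ref{thm:uniqueness}. Take a universal measurement $\U$ indifferent under all generalized rearrangements. Its average density $\rhobar_\U$ satisfies $\rho^{g}$-averaging, i.e., $\rhobar_\U$ is invariant under each $g$, because $\rho\mapsto\rho^g$ is linear and the law of $\rho$ is invariant. This is the same exchange-of-means argument as in the proof of Theorem~\ref{thm:uniqueness}. Since $\rhobar_\U$ is a probability measure, the first part yields a contradiction. The main obstacle is to pin down precisely which maps count as generalized rearrangements, so that the three-piece move with a freely chosen target length is legitimate. Once that is fixed, the rest is bookkeeping with cumulative functions.
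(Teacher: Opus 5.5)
Your proof is correct and is essentially the paper's: your three-piece move is exactly its generalized rearrangement sending an interval $I$ onto an interval $J$ of different length, and your observation that $F_\rho$ is constant on $(-1,1)$ is the same nesting idea that the paper writes as $c\ge 2c$. The paper's version is slightly more economical, because letting $I$ touch an endpoint removes the endpoint atoms at once; your preliminary swap $g_a$ is in fact length-preserving (each piece lands in an interval of its own length) and is only needed to dispose of the atom at $x=1$.
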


\begin{proof}
Let $\rhobar$ be a density invariant under all generalized rearrangements, and let $I$ and $J$ be any two intervals of positive length inside $[-1,1]$. There is a generalized rearrangement mapping $I$ onto $J$ (cut out $I$, stretch or compress it to the length of $J$, and fit the rest of the band around it); by invariance, $\rhobar$ gives the same mass to $I$ and to $J$. Call this common mass $c$. Take now $J$ strictly inside $I$, with $I\setminus J$ containing an interval of positive length: then $c=\int_I\rhobar=\int_J\rhobar+\int_{I\setminus J}\rhobar\ge2c$, so $c=0$. Every interval of positive length would then carry zero mass, which is incompatible with $\int_{-1}^1\rhobar=1$. If a $\U$-measurement were indifferent with respect to all generalized rearrangements, its average density would be invariant under all of them, by the same computation as in the proof of Theorem~\ref{thm:uniqueness}. So, no such $\U$ exists.
\end{proof}

\noindent Theorem~\ref{thm:length} tells us that \emph{indifference is always indifference with respect to a notion of size}. The set of all possible ways of selecting an interaction, taken as a bare set, has no symmetry that could single out a density: one has to say with respect to what the selection is indifferent. In our model, the answer is given by the physics of the model: the band is a segment of the Bloch ball, which is a Euclidean ball, its length is the Euclidean length, and the ``cells of equal length'' are cells of equal Euclidean length. This is the same input that gives the Born rule for the \emph{uniform} measurement in \cite{AS2014a}: the probability of an outcome is the ratio of two lengths, as per \eqref{eq:born}. The universal average does not create this input, it \emph{inherits} it and shows that nothing else is needed. The Euclidean structure of the extended Bloch sphere is in turn not an arbitrary choice: it is induced by the Hilbert--Schmidt scalar product of operators, which is, up to a factor, the only scalar product invariant under unitary transformations.

This is also the sense in which the dependence of the universal average on the ``parameterization'' of the band, which in the context of Bertrand's paradox is the core of Shackel's objection, is to be understood: a passive change of coordinates $y=h(x)$ does not by itself change the physical length measure (which must be pushed forward with the appropriate Jacobian). What changes the notion of size, and potentially the universal average, is instead imposing a \emph{new} uniform measure $dy$ in place of the transformed Euclidean measure $|dx/dy|\,dy$. In the extended Bloch representation there is one natural parameterization, the Euclidean one, and the universal average is uniquely defined with respect to it. The theorem simply says that this input cannot be derived from the set of densities, but has to come from the physical model, and in our case it does.

The distinction between an indifferent and a non-indifferent random selection of measurements is well illustrated by a calculation made by Aerts in \cite{Aerts1998}, before the notion of universal average was given its present form. There, the intermediate measurements were described by elastic bands uniformly disintegrable in a sub-interval $[d-\epsilon,d+\epsilon]$ and non-disintegrable elsewhere (a two-parameter generalization of the $\epsilon$-model), i.e., by the densities
\begin{equation}
\rho_{\epsilon,d}(x)=\frac{1}{2\epsilon}\ind_{[d-\epsilon,d+\epsilon]}(x),\quad 0\le\epsilon\le1,\quad|d|\le1-\epsilon
\end{equation}
and one asked what the transition probabilities would be if $(\epsilon,d)$ were chosen ``at random,'' uniformly in the triangle $T=\{0\le\epsilon\le1,|d|\le1-\epsilon\}$ of the allowed values. The result (Eq.~(7) of \cite{Aerts1998}) was found to be \emph{different} from the Born rule, by logarithmic correction terms. In the present language, this random choice is a universal measurement $\U_{\epsilon,d}$, and its average density can be computed in closed form. The triangle $T$ has unit area, so that choosing $(\epsilon,d)$ uniformly in $T$ means that the average density is obtained by integrating $\rho_{\epsilon,d}(x)$ over $T$ with the measure $d\epsilon\,dd$:
\begin{equation}\label{eq:f-int}
\rhobar_{\U_{\epsilon,d}}(x)=\iint_T\frac{1}{2\epsilon}\,\ind_{[d-\epsilon,d+\epsilon]}(x)\,d\epsilon\,d d
=\int_0^1\frac{d\epsilon}{2\epsilon}\,\Big|[x-\epsilon,x+\epsilon]\cap[-(1-\epsilon),1-\epsilon]\Big|
\end{equation}
where in the second expression we have first integrated over $d$: for a given $\epsilon$, the condition $x\in[d-\epsilon,d+\epsilon]$ means $d\in[x-\epsilon,x+\epsilon]$, and $d$ is also constrained to $[-(1-\epsilon),1-\epsilon]$, so that the integral over $d$ is the length (denoted $|\cdot|$) of the intersection of these two intervals. Consider $x\ge0$ (the case $x<0$ follows by the symmetry $x\to-x$). For small $\epsilon$, namely $\epsilon\le\frac{1-x}{2}$, the interval $[x-\epsilon,x+\epsilon]$ lies entirely inside $[-(1-\epsilon),1-\epsilon]$, and the length is $2\epsilon$. For $\frac{1-x}{2}<\epsilon\le\frac{1+x}{2}$, the right end is cut, and the intersection is $[x-\epsilon,1-\epsilon]$, of length $1-x$. For $\epsilon>\frac{1+x}{2}$, both ends are cut, and the intersection is the whole $[-(1-\epsilon),1-\epsilon]$, of length $2(1-\epsilon)$. The three contributions to \eqref{eq:f-int} are then
\begin{equation}
\int_0^{\frac{1-x}{2}}d\epsilon=\frac{1-x}{2},\quad
\frac{1-x}{2}\int_{\frac{1-x}{2}}^{\frac{1+x}{2}}\frac{d\epsilon}{\epsilon}=\frac{1-x}{2}\ln\frac{1+x}{1-x},\quad
\int_{\frac{1+x}{2}}^{1}\frac{1-\epsilon}{\epsilon}\,d\epsilon=\frac{x-1}{2}-\ln\frac{1+x}{2}
\end{equation}
and their sum, after a short rearrangement of the logarithms (and extending the result to $x<0$ by symmetry), takes the remarkably simple form
\begin{equation}\label{eq:f-entropy}
\rhobar_{\U_{\epsilon,d}}(x)=-\frac{1+x}{2}\ln\frac{1+x}{2}-\frac{1-x}{2}\ln\frac{1-x}{2}=H\Big(\frac{1+x}{2}\Big)
\end{equation}
where $H(p)=-p\ln p-(1-p)\ln(1-p)$ is the Shannon entropy (in natural units) of a yes-no experiment with probability $p$. So, the average density of Aerts' random choice is the entropy of the Born probability $p=(1+x)/2$ of the state whose particle lands at $x$. It is maximal at the center of the band, $\rhobar(0)=\ln2\approx0.69$, and vanishes at its ends, $\rhobar(\pm1)=0$. It is correctly normalized, since $\int_{-1}^1H(\frac{1+x}{2})dx=2\int_0^1H(p)dp=1$. Integrating \eqref{eq:f-entropy} one obtains the cumulative function
\begin{equation}\label{eq:F-epsd}
\bar F(x)=\int_{-1}^x\rhobar_{\U_{\epsilon,d}}(y)\,dy=\frac{1+x}{2}-\frac{(1+x)^2}{4}\ln\frac{1+x}{2}+\frac{(1-x)^2}{4}\ln\frac{1-x}{2}
\end{equation}
as one checks by differentiation, and for a vector-state corresponding to $x=\cos\alpha$, this is exactly Eq.~(7) of \cite{Aerts1998}.
So the transition probability found in 1998 is, as it must be by \eqref{eq:key}, the transition probability of the $\rho$-measurement with the average density \eqref{eq:f-entropy}, and its deviation from the Born rule is the deviation of this average density from the uniform one.

By Theorem~\ref{thm:uniqueness}, the $\U_{\epsilon,d}$-measurement is therefore \emph{not} indifferent and it is easy to see why. A sub-interval $[d-\epsilon,d+\epsilon]$ chosen uniformly in the triangle $T$ is much more likely to cover the center of the band than its ends, because a point near an end can only be covered by the few, small, intervals that fit there (this is what the three regimes of the computation above express). Hence, the uniform choice of a sub-interval privileges the center of the band: it is a random choice of a measurement which is \emph{not} indifferent with respect to the pieces of the band. The lesson is that ``choosing a measurement at random'' is not, by itself, enough to produce the quantum probabilities: the result depends on how the measurements are selected. Indifference provides a sufficient condition, but not a necessary one. Indeed, consider a $\U$-measurement selecting, with equal probability, either the density $\rho_L$, uniform on $[-1,0]$, or the density $\rho_R$, uniform on $[0,1]$. Clearly, $\rhobar_\U=\frac12(\rho_L+\rho_R)=\rho_u$, so that the Born rule is recovered. Yet this $\U$-measurement is not indifferent: although its law is invariant under the exchange of the two halves of the band, it is not invariant under arbitrary rearrangements of four dyadic cells, which can transform $\rho_L$ into a density different from both $\rho_L$ and $\rho_R$. Thus, reproducing the Born rule on average is a weaker requirement than indifference. What Theorem~\ref{thm:uniqueness} establishes is that indifference guarantees the Born rule without any further specification of the selection law. Furthermore, Corollary~\ref{cor:genuine} shows that the only fully indifferent law supported on absolutely continuous probability densities is the one concentrated on $\rho_u$. Thus, while suitably chosen weights can reproduce the Born rule without indifference, a non-trivial indifferent universal measurement necessarily requires a broader family of admissible densities.

\section{Generalizations}\label{sec:general}

In the extended Bloch representation, a non-degenerate measurement with $N$ outcomes is represented by an $(N-1)$-dimensional simplex $\triangle_{N-1}$, with vertices $\nn_1,\dots,\nn_N$, filled with an abstract elastic and disintegrable membrane. The point-particle representing the state falls orthogonally onto the membrane, reaching the point $\rpar=\sum_ir_i^\parallel\nn_i$, which divides the simplex into $N$ convex regions $A_1,\dots,A_N$. If the membrane begins to disintegrate in $A_i$, the particle is drawn to $\nn_i$. If the membrane is uniform, the probability of the outcome $a_i$ is the ratio of the Euclidean, $(N-1)$-dimensional volume of $A_i$ to the volume of the simplex, and this ratio is exactly the Born probability $\operatorname{Tr}D(\rr)P_i$ (Eqs.~(59)--(66) of \cite{AS2014a}). A general $\rho$-measurement is described by a density $\rho(\boldsymbol\lambda)$ on the simplex, and $\PP(\rr\to\nn_i|\rho)=\int_{A_i}\rho(\boldsymbol\lambda)d\boldsymbol\lambda$.

The same symmetry-based construction extends naturally to higher dimensions, once the cells of equal length are replaced by cells of equal volume. A convenient nested sequence of such cells is provided by the iterated \emph{barycentric subdivision} of the simplex: the first subdivision cuts $\triangle_{N-1}$ into $N!$ small simplices of equal volume, corresponding to the orderings of the barycentric coordinates ($\lambda_{\pi(1)}\ge\lambda_{\pi(2)}\ge\dots\ge\lambda_{\pi(N)}$, which are permuted among themselves by the volume-preserving permutations of the coordinates). Each small simplex is then subdivided in the same way, and so on, the cells becoming arbitrarily small.\footnote{At every level, cells of equal volume can be exchanged by piecewise affine, volume-preserving transformations, with a fixed convention for their common boundaries. Since the nested partitions have cells of vanishing diameter, they generate the Borel sets of the simplex.}
A universal measurement is indifferent if its law is invariant under the rearrangements of the cells of every level, and the proof of Theorem~\ref{thm:uniqueness} goes through unchanged: the average density gives to every cell of every level a mass equal to its relative volume, and since the cells of all levels determine the density (they become arbitrarily small and any region is approximated by unions of them), the average density is the uniform one, $\rhobar_\U=\rho_u$, and
\begin{equation}
\avg{\PP(\rr\to\nn_i|\rho)}^{\rm univ}=\operatorname{Tr}D(\rr)P_i,\quad i=1,\dots,N
\end{equation}
for every indifferent universal measurement and every $N\ge2$. Theorems~\ref{thm:finite}, \ref{thm:variance} (with the relative length $1/n$ replaced by the relative volume of the cells), \ref{thm:dichotomy} and \ref{thm:length} also carry over, with the same proofs. The $(N-1)$-dimensional cellular construction of Section~13.3 of \cite{AS2014a} is, again, one particular indifferent universal measurement, and the Dirichlet process with base probability density the uniform density on the simplex is another, whose \emph{weak topological support} contains all Borel probability measures on the membrane.

It is worth being precise about what the universal average does and does not explain, in relation to Gleason's theorem \cite{Gleason1957}. In the extended Bloch representation the Born rule results from two distinct ingredients: the \emph{orthogonal} fall of the point-particle onto the membrane, which is deterministic and fixes the linear dependence of the outcome probabilities on the state (the Born probabilities are the barycentric coordinates $r_i^\parallel$ of the orthogonal projection $\rpar$), and the \emph{uniform} disintegration of the membrane, which is indeterministic. The universal average concerns the second ingredient only: it shows that, given the geometry of the membrane and the landing point of the particle, indifference forces the disintegration to be uniform. It says nothing about the first ingredient, and it is precisely there that the content of Gleason's theorem lies. Indeed, as shown in \cite{Sassoli2026}, for $N=2$ a non-orthogonal landing rule produces outcome probabilities which are perfectly consistent and yet non-Born. The uniform breaking of the band would then yield these non-Born probabilities, and so would, by Theorem~\ref{thm:uniqueness}, every indifferent universal measurement, since the universal average acts on the way the band disintegrates, not on the way the particle lands. For $N\ge3$, on the contrary, the normalization constraints eliminate nonlinear rules (within the restricted class of scalar-product deformations studied in \cite{Sassoli2026}). This provides a Bloch-geometric illustration of Gleason-type restrictions. So, the two results are complementary and concern the two stages of the measurement process: within the particular scalar-product deformation examined in \cite{Sassoli2026}, Gleason-type normalization constraints restrict the first stage for $N\ge3$ (while for $N=2$ the orthogonal fall remains a genuine assumption of the model); the universal average fixes the indeterministic second stage, for every $N\ge2$. The heuristic argument of \cite{Aerts1998,Aerts1999}, which derived the universal transition probabilities directly from Gleason's theorem under the assumption that they depend only on the initial and final states, can be seen in retrospect as a conflation of the two stages. Degenerate measurements, which in \cite{AS2014a} require an additional deterministic process after the collapse, are treated in the same way, since this additional process does not involve the selection of the interaction.

Turning now to Bertrand's paradox, in \cite{AS2014b} the same cellular construction was applied to the different ``ways of tossing a stick'' onto a circle, described by densities $\rho(x,\theta)$ on the parameter space $[0,2]\times[0,\pi]$ of the lines meeting the circle ($\theta$ the direction, $x$ the distance of the line from the bottom of the circle), and the universal average was found to give the uniform density on this parameter space, hence the value $1/2$ for the probability that the chord is longer than the side of the inscribed triangle, which is also the value obtained by Jaynes \cite{Jaynes} from the requirement of invariance under rotations, translations and dilations of the circle. Similarly, for the ``ways of tossing two pebbles'' onto the circumference, the universal average gives $1/3$, and for a pebble tossed inside the disk, $1/4$.

The results of the present article apply to these cases without change: for each of these parameter spaces, with its area measure, the universal average is unique, exact at every resolution, given by every indifferent random choice of a density, and the two-valued cellular densities are inessential. In this sense, the questions (Q1)--(Q3), which were raised by Shackel precisely in the context of Bertrand's paradox, receive here their answer. The question (Q4), i.e., the dependence of the result on the parameterization, is in that context the crux of the matter, since Bertrand's question, as posed, specifies neither a randomized entity nor a geometry. Theorem~\ref{thm:length} tells us exactly what can and cannot be expected from a universal average in this respect: it cannot select a notion of size, but once one is fixed by the modelization of the problem, the universal average is unique. We believe that this allows a stronger conclusion than the one we drew in \cite{AS2014b}. Bertrand's question asks for a chord chosen ``at random,'' i.e., it appeals to indifference, and Theorem~\ref{thm:length} shows that indifference, however it is implemented -- by a direct application of the principle of insufficient reason to the chords, or by a universal average over all the ways of selecting them, or by any other procedure that does not privilege some chords over others -- cannot produce a notion of size out of nothing: it can only be indifference \emph{with respect to} a notion of size that has to be supplied independently. Consequently, no solution of Bertrand's paradox can avoid the separation between the two problems: an ``easy'' part, which specifies a physical realization of the randomization, thereby identifying the randomized entity and the natural geometric measure (with its associated symmetries) on the space of possibilities, and a ``hard'' part, which, given that specification, determines the outcome probabilities \cite{AS2014b}. The distinction between the two problems is thus forced upon us by Theorem~\ref{thm:length}, and is not a matter of interpretation. What the universal average adds is that, once the easy problem is solved, the hard one has a unique, exact and robust solution. The paradox arises, and will keep arising, whenever the easy problem is left implicit.

\section{Conclusion}\label{sec:conclusion}

Let us return to the question in the title: are quantum measurements universal measurements? The results of this article allow us to answer it in a more precise way than in \cite{AS2014a}.

First, we have identified what a universal measurement is: a random way of selecting a $\rho$-measurement, described by a probability law $\U$ on the set of all possible densities; and what makes it \emph{universal}: the law $\U$ must be \emph{indifferent}, i.e., it must not privilege any portion of the elastic band of the model, in the sense of being invariant under the cutting and re-gluing of the band in a different order. This is the transcription, at the level of the selection of the measurement, of the condition of maximal lack of knowledge which, at the level of the selection of the interaction, is expressed by the uniform density.

Second, we have shown that the outcome probabilities of \emph{every} indifferent universal measurement are the quantum probabilities. The universal average is unique, it is exact at every finite resolution, and it does not depend on the way the random selection is constructed: by finite families of cellular densities as in \cite{AS2014a}, by continuous families, or by random mixtures of deterministic measurements of the Dirichlet type, whose \emph{weak topological support} covers the entire space of probability measures on the membrane. In this precise sense, the quantum measurement is the universal measurement: it is the common ``center of mass'' of all the indifferent ways of selecting a measurement. The cellular construction of \cite{AS2014a} was one way of exhibiting this center of mass, and its apparent peculiarities (equal cells, two values, constant sequence of averages, combinatorial identities) are all explained by the underlying symmetry.

Third, we have shown that the quantum measurement is not only the average of every indifferent universal measurement, but the \emph{only} indifferent universal measurement free of deterministic components. The fluctuations of a universal measurement around its quantum average are entirely due to its deterministic components, and are governed by a single number, the coincidence probability $\beta$, which also controls
 the correlations between the outcomes of a sequence of measurements performed with the same apparatus. The Born rule, together with the independence of repeated measurements, is thus recovered either when the way of selecting the interaction is renewed at each run, or when it is persistent but free of deterministic components, and the notion of universal measurement provides, with $\beta$ and the persistence time of the way of selecting the interaction, the parameters that quantify possible deviations.

Fourth, we have shown that the only input of the whole construction which cannot be derived from the notion of universal average itself is the notion of length (or volume) on the membrane representing the measurement, i.e., the Euclidean geometry of the extended Bloch sphere, which is itself a consequence of the unitary invariance of quantum mechanics. Indifference is always indifference with respect to a notion of size. The universal average does not create the geometry of the measurement, it shows that, given this geometry, and a point-particle falling orthogonally onto the disintegrable membrane, nothing else is needed to obtain the Born rule. The orthogonal fall, which encodes the linear dependence of the probabilities on the state, is the part of the model to which Gleason's theorem pertains, and the two results are complementary: Gleason's theorem constrains the deterministic stage of the measurement, for $N\ge3$, and the universal average fixes its indeterministic stage, for every $N\ge2$ (Section~\ref{sec:general}).

The answer to the question in the title is therefore affirmative, in the following precise sense. If a quantum measurement is a situation of maximal lack of knowledge, not only about the interaction that is actualized, but also about the way it is selected, and if this lack of knowledge is understood as indifference with respect to the natural geometry of the measurement, then the Born rule is not merely compatible with the hypothesis: it is forced by it, whatever the details of the way the interaction is selected, and no other probability rule is possible. Conversely, every indifferent way of selecting an interaction, whether it is realized by a uniform membrane, by a random collection of deterministic breaking points, or by any mixture of the two, is operationally a quantum measurement. Whether quantum measurements \emph{are} universal measurements in this sense, in nature, is of course a question that mathematics cannot settle. But the hypothesis is now shown to be consistent, unique in its consequences, and, through the statistics of sequences of measurements, not entirely beyond the reach of future experiments.

\appendix

\section{Combinatorial identities}\label{app:comb}

We derive two combinatorial identities (Eqs.~(108) and (113) of \cite{AS2014a}). We observe that under $\U_n$ the disintegrable cells form a random non-empty subset $S$ of $\{1,\dots,n\}$, all the $2^n-1$ non-empty subsets being equally likely, and $|S|=n^d$ is the number of disintegrable cells. We write $\eta_j(S)=1$ if $j\in S$ (the $j$-th cell is disintegrable) and $\eta_j(S)=0$ otherwise. Since the two-valued density $\rho_n$ associated with $S$ takes the value $n/2n^d$ in the disintegrable cells, of length $2/n$, and the value $0$ in the others, the mass of the $j$-th cell is
\begin{equation}\label{eq:mass-S}
m_j(\rho_n)=\frac{\eta_j(S)}{|S|},\quad j=1,\dots,n
\end{equation}
i.e., $1/n^d$ if the cell is disintegrable and $0$ if it is not. We can compute the mean value of the mass of the first cell in two different ways. The first way is a direct count: the subsets $S$ of size $k$ which contain the cell $1$ are obtained by choosing the other $k-1$ elements among the remaining $n-1$ cells, so there are $\binom{n-1}{k-1}$ of them, each contributing $1/k$ to the mass of the first cell. Hence
\begin{equation}\label{eq:direct}
\avg{m_1(\rho_n)}=\frac{1}{2^n-1}\sum_{k=1}^{n}\binom{n-1}{k-1}\frac1k
\end{equation}
The second way uses the symmetry. Summing \eqref{eq:mass-S} over $j$ gives $\sum_jm_j(\rho_n)=\sum_j\eta_j(S)/|S|=|S|/|S|=1$ for every $S$, so that the mean values of the $n$ cell masses sum to $1$, and since the uniform law on the subsets does not distinguish between the cells (a rearrangement of the band maps subsets of size $k$ onto subsets of size $k$), these $n$ mean values are all equal. Therefore
\begin{equation}\label{eq:symmetric}
\avg{m_1(\rho_n)}=\frac1n
\end{equation}
Comparing \eqref{eq:direct} and \eqref{eq:symmetric} we obtain the identity
\begin{equation}\label{eq:identity}
\sum_{k=1}^{n}\binom{n-1}{k-1}\frac1k=\frac{2^n-1}{n}
\end{equation}
which is nothing but the identity (113) of \cite{AS2014a}, written for $n-1$ instead of $n$. One can also check \eqref{eq:identity} directly, using $\binom{n-1}{k-1}\frac1k=\binom nk\frac1n$ and $\sum_{k=1}^n\binom nk=2^n-1$, but the point here is that it is the symmetry which makes the identity true, and which makes it hold for every $n$. Identity (108) of \cite{AS2014a}, on the other hand, can be obtained using (\ref{eq:identity}), with $n$ replaced by $n+1$, and the subtraction:
\begin{equation}\label{eq:identity108}
\sum_{k=0}^{n}\frac{k}{k+1}\binom nk=\sum_{k=0}^{n}\binom nk-\sum_{k=0}^{n}\frac{1}{k+1}\binom nk=2^n-\frac{2^{n+1}-1}{n+1}=\frac{2^n(n-1)+1}{n+1}
\end{equation}

\section{Approximation of an arbitrary density}\label{app:approx}

We show that, for every density $\rho$ on $[-1,1]$, possibly with delta components, there is a \emph{single} sequence of two-valued cellular densities $\rho_n$, in the sense of \eqref{eq:cellular}, such that $\PP(x\in A|\rho_n)\to\PP(x\in A|\rho)$ for every interval $A$ whose endpoints are not delta components of $\rho$, and uniformly over all intervals if $\rho$ has no delta components. 

This is the statement at which Section~13.1 of \cite{AS2014a} aimed. The argument given there, however, chose the number of disintegrable elementary cells separately in each cell, without controlling the total error accumulated over all the cells, so that, strictly speaking, it only established the weaker statement in which the approximating sequence is allowed to depend on the interval $A$ considered (this is the ambiguity in the order of the quantifiers pointed out in \cite{Shackel2024}). The construction below provides a single sequence, with an explicit bound on the error, valid for all intervals at once.

Let $m$ be the number of \emph{cells} $S_1,\dots,S_m$, of length $\frac2m$, and let each cell be divided into $\ell$ elementary cells, so that $n=m\ell$. Let $p_i=\int_{S_i}\rho$ be the cell masses of $\rho$, $\sum_ip_i=1$. Choose $\ell=\ell(m)=m^2+m$, and let the two-valued cellular density $\rho_n$ have, in the cell $S_i$, exactly $n_i=\lfloor\ell p_i\rfloor$ disintegrable elementary cells (the first $n_i$ ones, say), so that the total number of disintegrable elementary cells is $n^d=\sum_in_i$, with $\ell-m<n^d\le\ell$ (because each $\lfloor\ell p_i\rfloor$ differs from $\ell p_i$ by less than $1$); in particular $n^d\ge1$. The mass of $\rho_n$ in $S_i$ is $n_i/n^d$, and
\begin{equation}
\sum_{i=1}^m\Big|\frac{n_i}{n^d}-p_i\Big|\le\sum_{i=1}^m\frac{|n_i-\ell p_i|}{n^d}+\sum_{i=1}^m\ell p_i\Big(\frac1{n^d}-\frac1\ell\Big)<\frac{m}{n^d}+\frac{\ell-n^d}{n^d}<\frac{2m}{\ell-m}=\frac2m
\end{equation}
So, the cell masses of $\rho_n$ and of $\rho$ differ, in total, by less than $\frac2m$. For an interval $A$ whose endpoints fall in the cells $S_a$ and $S_b$, let $U$ be the union of the cells entirely contained in $A$: then $|\PP(x\in U|\rho_n)-\PP(x\in U|\rho)|<\frac2m$, while the parts of $A$ outside $U$ lie in $S_a\cup S_b$ and carry a mass at most $p_a+p_b$ under $\rho$ and at most $p_a+p_b+\frac2m$ under $\rho_n$. Altogether, we have
\begin{equation}
\big|\PP(x\in A|\rho_n)-\PP(x\in A|\rho)\big|<\frac4m+2(p_a+p_b)\le\frac4m+4\max_ip_i
\end{equation}
If $\rho$ has no delta components, $\max_ip_i\to0$ as $m\to\infty$, and the right-hand side tends to zero uniformly in $A$. If $\rho$ has delta components, $p_a+p_b$ tends to zero as long as the endpoints of $A$ are not delta components. In both cases a single sequence, $n=m\ell(m)$, $m=1,2,\dots$, works for all intervals.

\section{Concentration of cellular universal measurements}\label{app:concentration}

We prove the bound \eqref{eq:concentration}. Under $\U_n$, the set $S$ of disintegrable cells is a random non-empty subset of $\{1,\dots,n\}$, all the $2^n-1$ non-empty subsets being equally likely, and for a region $A$ made of $a$ whole cells the probability of the outcome associated with $A$ is the fraction of disintegrable cells that lie in $A$:
\begin{equation}\label{eq:ratio}
\PP(x\in A|\rho_n)=\frac{|S\cap A|}{|S|}
\end{equation}
where $|S|=n^d$ is the number of disintegrable cells and $|S\cap A|$ the number of those lying in $A$. Let us first fix the number $|S|=k$ of disintegrable cells. Since all subsets are equally likely, the subsets of size $k$ are equally likely among themselves, i.e., $S$ is obtained by drawing $k$ cells at random, without replacement, among the $n$ cells, of which $a$ belong to $A$. The number $|S\cap A|$ of drawn cells belonging to $A$ is then a classical \emph{hypergeometric} random variable, whose mean and variance are
\begin{equation}
\avg{|S\cap A|}_k=k\,\frac an,\quad \Var_k\big(|S\cap A|\big)=k\,\frac an\Big(1-\frac an\Big)\frac{n-k}{n-1}\le\frac k4
\end{equation}
where the subscript $k$ recalls that $|S|=k$ is fixed, and we used $\frac an(1-\frac an)\le\frac14$, and $\frac{n-k}{n-1}\le1$. Dividing by $k$, the ratio \eqref{eq:ratio} has, for fixed $|S|=k$, mean value $\frac an$ (whatever $k$: this is Theorem~\ref{thm:finite} once more) and variance at most $1/4k$. Now, the variance of a random variable $X$ whose law depends on an auxiliary random quantity (here $k=|S|$) is given by the law of total variance: it is the mean, over $k$, of the conditional variance of $X$ for fixed $k$, plus the variance, over $k$, of the conditional mean of $X$ for fixed $k$. Here the conditional mean is $a/n$ for every $k$, so its variance over $k$ vanishes, and we are left with
\begin{equation}\label{eq:totvar}
\Var_{\U_n}\big[\PP(x\in A|\rho_n)\big]=\Big\langle\Var_k\Big(\frac{|S\cap A|}{k}\Big)\Big\rangle_{\!\U_n}\le\frac14\,\Big\langle\frac{1}{|S|}\Big\rangle_{\!\U_n}
\end{equation}
Since there are $\binom nk$ subsets of size $k$, each with probability $\frac{1}{2^n-1}$,
\begin{equation}
\Big\langle\frac{1}{|S|}\Big\rangle_{\!\U_n}=\frac{1}{2^n-1}\sum_{k=1}^n\binom nk\frac1k\le\frac{2}{2^n-1}\sum_{k=0}^n\binom nk\frac{1}{k+1}=\frac{2}{2^n-1}\frac{2^{n+1}-1}{n+1}\le\frac{6}{n+1}
\end{equation}
where we used $\frac1k\le\frac{2}{k+1}$ for $k\ge1$, then the identity \eqref{eq:identity} with $n$ replaced by $n+1$, and finally $\frac{2^{n+1}-1}{2^n-1}\le3$ for $n\ge1$. Inserting this into \eqref{eq:totvar} gives $\Var_{\U_n}[\PP(x\in A|\rho_n)]\le\frac{3}{2(n+1)}\le\frac{2}{n+1}$, which is \eqref{eq:concentration}. The bound is not sharp: a numerical evaluation gives $n\Var_{\U_n}[\PP(x\in A|\rho_n)]\approx\frac14$ for $n=10,10^2,10^3$, and $A$ half of the band, but what matters is that the variance tends to zero as $n\to\infty$, so that the outcome probabilities of the randomly selected two-valued density concentrate on the quantum values.

\section*{Acknowledgments} I would like to express my gratitude to Diederik Aerts for the wonderful scientific collaboration we have shared over the past fifteen years, and which continues to this day. After a long period away from academic research, my interest was rekindled by my encounter with him and our initial discussions of the problem at the heart of the present article: his conjecture on universal measurements, which eventually led us in 2014 to the result presented in \cite{AS2014a}. I am also grateful to N. Shackel, whose criticisms, though severe, were undoubtedly pertinent and provided me with the motivation to revisit this fascinating notion, which I believe has now found a more mature formulation in the present work.  

\section*{Declaration of generative AI use} The mathematical reformulation of the universal average as an invariant average presented in this article, including the statements and proofs of Theorems 1 through 5, the connection to Dirichlet processes, and a first draft of the manuscript, were developed by the author drawing also on a dialogue with the artificial intelligence assistant Claude (Anthropic).

\end{document}